\newtheorem{claim}{Claim}
\newtheorem{theorem}{Theorem}
\newtheorem{corollary}[theorem]{Corollary}
\theoremstyle{definition}
\newtheorem{definition}{Definition}
\theoremstyle{remark}
\newcommand{\ld}{\mathsf{LD}}
\newcommand{\loglcp}{\mathsf{logLCP}}
\newcommand{\lh}{\mathsf{LH}}
\newcommand{\llcp}{\mbox{\sf LogLCP}}
\newcommand{\np}{\mbox{\sf NP}}
\newcommand{\mso}{\mbox{\sf MSO}}
\newcommand{\opt}{\mbox{\sc opt}}
\newcommand{\adm}{\mbox{\sc adm}}
\newcommand{\stt}{\mbox{\sc spanning tree}}
\newcommand{\nta}{\mbox{\sc nontrivial automorphism}}
\newcommand{\ts}{\mbox{\sc travelling salesman}}
\newcommand{\ksat}{\mbox{\sc qbf-sat}}
\newcommand{\vck}{\mbox{\sc cycle-vc-dimension}} 
\newcommand{\mst}{\mbox{\sc mst}} 
\newcommand{\hc}{\mbox{\sc hamiltonian cycle}} 
\newcommand{\idx}{\mbox{\rm index}} 
\DeclareMathOperator{\Q}{Q}
\DeclareMathOperator{\id}{id}
\newcommand{\co}[1]{\operatorname{co-\Lambda}_{#1}}
\newenvironment{myabstract}
               {\list{}{\listparindent 1.5em%
                        \itemindent    \listparindent
                        \leftmargin    0pt
                        \rightmargin   0pt
                        \parsep        0pt}%
                \item\relax}
               {\endlist}
\newenvironment{mycover}
               {\list{}{\listparindent 0pt
                        \itemindent    \listparindent
                        \leftmargin    0pt
                        \rightmargin   0pt
                        \parsep        0pt}%
                \raggedright
                \item\relax}
               {\endlist}
\begin{document}

\vspace*{2ex}
\begin{mycover}
{\huge\bfseries A Hierarchy of Local Decision\footnote{The first and second authors received additional support from ANR project DISPLEXITY, and from INRIA project GANG.}\par}
\bigskip
\bigskip

\textbf{Laurent Feuilloley}

\nolinkurl{laurent.feuilloley@liafa.univ-paris-diderot.fr}
\medskip

{\small Institut de Recherche en Informatique Fondamentale (IRIF), \\ CNRS and University Paris Diderot, France \par}
\bigskip

\textbf{Pierre Fraigniaud}

\nolinkurl{pierre.fraigniaud@liafa.univ-paris-diderot.fr}
\medskip

{\small Institut de Recherche en Informatique Fondamentale (IRIF), \\ CNRS and University Paris Diderot, France \par}
\bigskip

\textbf{Juho Hirvonen}

\nolinkurl{juho.hirvonen@aalto.fi}
\medskip

{\small Helsinki Institute for Information Technology HIIT, \\ Department of Computer Science, Aalto University, Finland\par}
\medskip

\title{}

\end{mycover}
\bigskip
\begin{myabstract}
\noindent\textbf{Abstract.} We extend the notion of \emph{distributed decision} in the framework of distributed network computing, inspired by recent results on so-called \emph{distributed graph automata}. We show that, by using distributed decision mechanisms based on the interaction between a \emph{prover} and a \emph{disprover}, the size of the certificates distributed to the nodes for certifying a given network property can be drastically reduced. For instance, we prove that minimum spanning tree can be certified with $O(\log n)$-bit certificates in $n$-node graphs, with just one interaction between the prover and the disprover, while it is known that certifying MST requires $\Omega(\log^2n)$-bit certificates if only the prover can act. The improvement can even be exponential for some simple graph properties. For instance, it is known that certifying the existence of a nontrivial automorphism requires $\Omega(n^2)$ bits  if only the prover can act. We show that there is a protocol with two interactions between the prover and the disprover enabling to certify nontrivial automorphism with $O(\log n)$-bit certificates. These results are achieved by defining and analysing a \emph{local hierarchy} of decision which generalizes the classical notions of \emph{proof-labelling schemes} and \emph{locally checkable proofs}.
\end{myabstract}

\thispagestyle{empty}
\setcounter{page}{0}
\newpage


\section{Introduction}

\subsection{Context and Objective}

This paper is tackling the long-standing issue of characterizing the power of local computation in the framework of distributed network computing~\cite{Peleg00}. Our concern is the ability to design \emph{local} algorithms, defined as distributed algorithms in which every node of a network (i.e., every computing entity in the system) can compute its output after having consulted only nodes in its vicinity. That is, communications proceed along the links of the network, and, in a local algorithm, every node must output after having exchanged information with nodes at constant distance only. A \emph{construction} task consists, for the nodes of a network $G=(V,E)$ where each node~$u$ is given an input $x(u)$, to collectively and concurrently compute a collection  $y(u)$, $u\in V$, of  individual outputs, such that $(G,x,y)$ satisfies some property characterizing the task to be solved. For instance, the minimum-weight spanning tree (MST) task consists, given the weights $x(u)$ of all the incident edges of every node~$u$, in computing a subset~$y(u)$ of edges incident to~$u$ such that the set $\{y(u), u\in V\}$ forms a MST in~$G$. Similarly, the maximal independent set (MIS) task consists of computing $y(u)\in \{0,1\}$, $u\in V$, such that the set $\{u\in V: y(u)=1\}$ forms an MIS. It is an easy observation that the MST task cannot be solved locally as the weights of far-away edges may impact the output of a node. In a seminal result Linial showed that the same is true for MIS~\cite{Linial92}: there is no local algorithm for constructing an MIS, even on an $n$-node ring. Nevertheless, there are many construction tasks that can be solved locally, such as approximate solutions of NP-hard graph problems (see, e.g.,~ \cite{FloreenHKKMS11,KuhnMW04,LenzenOW08,LenzenW08}). In general it is Turing-undecidable whether or not a construction task can be solved locally~\cite{NaorS95}. 

Interestingly, the Turing-undecidability result of Naor and Stockmeyer~\cite{NaorS95} concerning  the locality of construction tasks holds even if one restricts the question to properties that can be locally decided. A distributed \emph{decision} task~\cite{FraigniaudKP13} consists, given an input $x(u)$ to every node in a network~$G$, to decide whether $(G,x)$ satisfies some given property. An instance is accepted by a distributed algorithm if and only if every node individually accepts (i.e., every node~$u$ outputs $y(u)=\mbox{true}$). For instance, proper colouring can easily be  decided locally by having each node merely comparing its colour with the ones of its neighbours. On the contrary, deciding whether a collection of edges defined by $\{x(u), u\in V\}$ forms a MST is not possible locally (in fact, even separating paths from cycles is not possible locally). Similarly to the  sequential computing setting, there are strong connections between the construction variant of a task and the ability to locally decide the legality of a given candidate solution for the same task, as illustrated by, e.g., the derandomization results in~\cite{FeuilloleyF15,NaorS95}, and the approximation algorithms in~\cite{SarmaHKKNPPW12}. These connections have motivated work focusing on the basic question: what can be decided locally? This paper is aiming at pushing further our current knowledge on this question.

Two specific lines of work have motivated our approach of local decision in this paper. The first line of research is related to the notion of \emph{proof-labelling schemes} introduced by Korman et al.~\cite{KormanKP10}, who showed that while not all graph properties can be decided locally, they can all be \emph{verified} locally, with the help of local \emph{certificates} provided by a prover. Unfortunately, there are natural graph properties (e.g., the existence of a non-trivial automorphism) which require $\Omega(n^2)$-bit certificates to be verified by any local distributed algorithm~\cite{GoosS11}. G\"{o}\"{o}s and Suomela introduced the more practical class $\llcp$ of all graph properties that can be verified using certificates of size $O(\log n)$ bits~\cite{GoosS11}, i.e., merely the size required to store the identities of the nodes. The class $\llcp$ contains non locally decidable properties such as hamiltonicity and non-bipartiteness. $\llcp$  even contains graph properties that are not in $\np$. Also, all existential-$\mso$ graph properties are shown to be in $\llcp$. 

The second line of research which motivated our approach is the study of \emph{distributed graph automata}. In particular, \cite{Reiter15} recently proved that an analogue of the polynomial hierarchy, where sequential polynomial-time computation is replaced by distributed local computation, turns out to coincide with $\mso$. However, while this result is important for our understanding of the computational power of finite automata, the model does not quite fit with the standard model of distributed computing aiming at capturing the power of large-scale computer networks (see, e.g., \cite{Peleg00}). Indeed, on the one hand, the model in~\cite{Reiter15} is somewhat weaker than desired, by assuming a finite-state automata at each node instead of a Turing machine, and by assuming anonymous computation instead of the presence of unique node identities. On the other hand, the very same model is also stronger than the standard model, by assuming a decision-making mechanism based on an arbitrary mapping from the collection of all node states to $\{\mbox{true},\mbox{false}\}$. Instead, the classical distributed decision mechanism is based on the  logical conjunction of the individual decisions. This is crucial as this latter decision mechanism provides the ability for every node rejecting the current instance to raise an alarm, and/or to launch a recovery procedure, without having to collect all of the sindividual decisions. 

In this paper, our objective is to push further the study initiated in~\cite{GoosS11} on the $\llcp$ class, by adopting the approach of~\cite{Reiter15}. Indeed, $\llcp$ can be seen as the first level $\Sigma_1$ of a \emph{local hierarchy} $(\Sigma_k,\Pi_k)_{k\geq 0}$,  where $\Sigma_0=\Pi_0=\ld$, the class of properties that can be locally decided~\cite{FraigniaudKP13}, and, for $k\geq 1$, $\Sigma_k$ is the class of graph properties for which there exists a local algorithm $A$ such that,  for every instance $(G,x)$,  
\[
(G,x) \;\mbox{is legal} \iff \exists \ell_1 \forall \ell_2 \exists \ell_3 \dots Q \ell_k : A(G,x,\ell_1,\ell_2,\dots,\ell_k) \; \mbox{accepts}
\]
with $k$ alternations of quantifiers, and where $Q$ is the universal quantifier if $k$ is even, and the existential quantifier otherwise. ($\Pi_k$ is defined similarly as $\Sigma_k$, but starting with a universal quantifier). The $\ell_i$'s are called \emph{labelling functions}, assigning a label $\ell_i(v)\in\{0,1\}^*$ to every node~$v$, such that, for every node~$v$, $|\ell_i(v)|=O(\log n)$ in $n$-node networks. Our aim is to analyze the local hierarchy in the general context of distributed network computing~\cite{Peleg00}, where each node has an identity which is unique in the network, every node has the computational power of a Turing machine, and where the acceptance of an instance by an algorithm is defined as the logical conjunction of the individual decisions of the nodes. 

\subsection{Our Results}

We study a hierarchy $(\Sigma_k,\Pi_k)_{k\geq 0}$ of local decision which represents a natural extension of proof-labelling scheme, as well as of locally checkable proof, with succinct certificates (i.e., of size $O(\log n)$ bits). In addition to its conceptual interest, this hierarchy might have some practical impact. Indeed, any level $k$ of the hierarchy can be viewed as a game between a \emph{prover} and a \emph{disprover}, who play in turn by alternating $k$ times. Roughly, on legal instances, the prover aims at assigning distributed certificates responding to any attempt of the disprover to demonstrate that the instance is illegal, and vice-versa on illegal instances. The referee judging the correctness of the collection of certificates produced by the players is a local distributed algorithm. For instance, the extensively studied class $\Sigma_1$ includes problems whose solutions are such that their legality can be certified by a prover using distributed certificates. Instead, the class $\Pi_2$  includes problems whose solutions are such that their legality can be certified by a prover against any other candidate solution provided by a disprover, both using distributed certificates. 

We show that many problems have succinct proofs in the hierarchy. Actually, climbing up the hierarchy enables to reduce drastically the size of the certificates. For instance, we show a quadratic improvement for MST, which requires locally checkable proofs of $\Omega(\log^2n)$ bits, while MST stands at the second level of our hierarchy. That is, there is a $\Pi_2$-protocol for MST using distributed certificates of $O(\log n)$ bits. For graph properties such as nontrivial automorphism, the improvement can even be exponential in term of certificate size, by relaxing the verification from locally checkable proofs with $\Omega(n^2)$ bits proofs to $\Sigma_3$ (with $O(\log n)$ bits proofs). More generally,  many natural optimization problems are on the second level of our hierarchy. On the other hand, we also show that there are simple  (Turing-computable) languages outside the local hierarchy. This latter property illustrates the impact of insisting on compact $O(\log n)$-bits certificates: there are graph properties that cannot be locally certified via a finite number of interactions between a prover and a disprover using succinct certificates. 

In addition, we prove several results regarding the structure of the hierarchy.  In particular, we show that if the hierarchy collapses partially at any level, then it collapses all the way down to that level. On the other hand, we prove that the hierarchy does not collapse to the first level (i.e., the first and second levels are distinct). Distributed decision is naturally asymmetric, that is, reversing the individual decision of the algorithm at each node does not correctly reverse the global decision of the algorithm. As a consequence, it is not necessarily the case that $\mbox{co-}\Sigma_k=\Pi_k$, and vice-versa. However, we show that one additional level of quantifiers is always sufficient to reverse a decision (i.e., to decide the complement of a language). Finally, we show that, for every graph property at the intersection of a level-$k$ class and the complement of this class, there is a protocol deciding that property at level $k$ with \emph{unanimous} decision, for both legal and illegal instances. 

\subsection{Related Work}

Several forms of ``local hierarchies'' have been investigated in the literature, with the objective of understanding the power of local computation, or for the purpose of designing verification mechanisms for fault-tolerant computing. In particular, as we already mentioned, \cite{Reiter15} has investigated the case of \emph{distributed graph automata}, where nodes are anonymous finite automata, and where the decision function is a global interpretation of the all the individual outputs of the nodes. In this context, it was proved that the local hierarchy is exactly captured by the MSO formulas on graphs. 

The picture is radically different in the framework in which the computing entities are Turing machines with pairwise distinct identities, and where  the decision function is the logical conjunction of all the individual boolean outputs. In \cite{FraigniaudKP13}, the authors investigated the local hierarchy in which the certificates must not depend on the identity-assignment to the nodes. Under such \emph{identity-oblivious} certificates, there are distributed languages outside~$\Sigma_1$. However, all languages are in the probabilistic version of $\Sigma_1$, that is, in $\Sigma_1$ where the correctness of the verification is only stochastically guaranteed with constant probability. In \cite{FraigniaudHK12}, it is proved that $\Sigma_1$ is exactly captured by the set of distributed languages that are closed under lift. (A configuration $(G',x')$ is a $t$-lift of a configuration $(G,x)$ if there is an input-preserving mapping from $V(G')$ to $V(G)$ which preserves the $t$-neighbourhood of the nodes in these graphs). Interestingly,  in the same framework as~\cite{FraigniaudKP13} but where the decision function is a global interpretation of the all the individual outputs, instead of the logical conjunction of individual boolean outputs, \cite{ArfaouiFIM14,ArfaouiFP13} proved that the local hierarchy collapses to $\Sigma_1$. Also, in the same framework as~\cite{FraigniaudKP13}, but where the certificates may depend on the identity assignment, all distributed languages are in $\Sigma_1$ (see \cite{KormanKP10}). 

The literature about  the local hierarchy in the context of Turing machine computation tend to show that all languages are at the very bottom of the hierarchy. However, \cite{GoosS11} proved that, to be placed in $\Sigma_1$, there are distributed languages on graphs (e.g., the existence of a nontrivial automorphism) which require to exchange certificates of size $\Omega(n^2)$ bits among neighbours, which is enough to trivially decide any problem. Similarly, \cite{KormanK07,KormanKP10} has proved that certifying Minimum-weight Spanning Tree (MST) requires to exchange certificates on $\Theta(\log^2n)$ bits, which can be costly in networks with limited bandwidth, i.e., under the CONGEST model~\cite{Peleg00}. In \cite{KormanKM15}, it is proved that the size of the certificates for MST can be decreased to $O(\log n)$ bits, but to the expense of $O(\log n)$ rounds of communication. Recently, \cite{MorFP15} has proved that the amount of communication between nodes (but not necessarily the size of the certificates) for verifying general network configurations can be exponentially decreased if using randomization, and~\cite{ForsterLSW16} analyzed in depth the certificate size for $s$-$t$ connectivity and acyclicity. 

It is also worth mentioning the role of the node identities in distributed decision. For instance, after noticing that the identities are leaking information to the nodes about the size of the network (e.g., at least one node has an ID at least $n-1$ in $n$-node network), it was recently proved that restricting the  algorithms to be identity-oblivious reduces the ability to decide languages locally in $\Sigma_0$ (see~\cite{FraigniaudGKS13}), while this is not the case for $\Sigma_1$ (see~\cite{FraigniaudHK12}). Recently, \cite{FraigniaudHS15} characterized the ``power of the IDs'' in local distributed  decision. In \cite{EmekSW14}, the authors discussed what can be computed in an anonymous networks, and showed that the answer to this question varies a lot depending on the commitment of the nodes to their first computed output value, i.e., whether it is revocable or not. In the context of local decision, the output is assumed to be irrevocable.

In general, we refer to~\cite{Suomela13} for a recent survey on local distributed computing, and we refer to \cite{FraigniaudRT13,FraigniaudRT14} for distributed decision in the context of asynchronous crash-prone systems with applications to runtime verification, and to~\cite{ArfaouiF14} for distributed decision in contexts where nodes have the ability to share non classical resources (e.g., intricate quantum bits).


\section{Local Decision} \label{sec:preliminaries}

Let $G = (V,E)$ denote an undirected graph, where $V$ is the set of nodes, and~$E$ is the set of edges. The subgraph induced by nodes at distance (i.e., number of hops) at most $t$ from a node $v$ is denoted by $B_G(v,t)$. All graphs considered in this paper are assumed to be connected (for non connected graphs, our results apply separately to each connected components). The number of nodes in the graph is denoted by $n$. In every graph $G=(V,E)$, each node $v\in V$ is assumed to have a name from the set $\{1, \dots, N\}$, denoted by $\id(v)$, where $N$ is polynomial in $n$. In other words, all identities are stored on $O(\log n)$ bits. In a same graph, all names are supposed to be pairwise distinct. 

\vspace*{-2ex}

\subparagraph{Distributed languages.} 

A \emph{distributed language} $L$ is a set of pairs $(G,x)$, where $G$ is a graph and $x$ is a function that assigns some local input $x(v)$ to each node~$v$. We assume that all inputs $x(v)$ are polynomial in~$n$, and thus can be stored locally on $O(\log n)$ bits. The following are typical examples of distributed languages:
\begin{itemize}[noitemsep]
	\item \textsc{3-colouring}: $(G,x)$ such that $x$ encodes a proper 3-colouring of $G$;
	\item \textsc{3-colourability}: graphs that  can be properly 3-coloured;
	 \item \textsc{nta}: graphs with a nontrivial automorphism;
	\item \textsc{planarity}: planar graphs. 
\end{itemize}
The \emph{complement} $\bar{L}$ of a distributed language $L$ is defined as the set $\bar{L} = \{ (G,x) : (G,x) \notin L \}$. For instance, the complement of \textsc{3-colouring} is \textsc{non-3-colouring}, consisting of all pairs $(G,x)$ such that $x$ is not a proper $3$-colouring of $G$.

\vspace*{-2ex}

\subparagraph{Labellings.} 

A \emph{labelling} $\ell$ is a function $\ell \colon V(G) \to \{0,1\}^*$, assigning a bit string to each node. If, for every graph $G$ and every node $v\in V(G)$, $\ell(v)\in\{0,1\}^k$, we say that the labelling $\ell$ is of size $k$. In this paper, we are mostly interested in labellings of logarithmic size in the number of nodes in the input graph.

\vspace*{-2ex}

\subparagraph{Local algorithms.} 

We use the standard \textsf{LOCAL} model of distributed computing~\cite{Peleg00,Linial92}. In this model each node $v \in V(G)$ is a computational entity that has direct communication links to other nodes, represented by the edges of $G$. Every node runs the same algorithm. In this paper, all algorithms are deterministic. Nodes communicate with each other in synchronous communication rounds. During each round, every node is allowed to (1) send a message to each of its neighbours, (2) receive a message from each of its neighbours, and (3) perform individual computation. At some point every node has to halt and produce an individual output. The number of communication rounds until all nodes have halted is the \emph{running time} of an algorithm.

A  \emph{local}  algorithm is a distributed algorithm $A$ for which there exists a constant $t$ such that, for every instance $(G,x)$, the running time of $A$ in $(G,x)$ is at most $t$. Since the most a node can do in $t$ communication rounds  is to gather all the information available in its local neighbourhood $B_G(v,t)$, a local algorithm $A$ can be defined as a (computable) function from all possible labelled local neighbourhoods to some output set. Given an ordered set $\bar{\ell} = (\ell_1, \ell_2, \dots, \ell_k)$ of labellings, for some $k\geq 0$, and given an instance $(G,x)$, we denote by $A(G,v,x,\bar{\ell})$ the output of node $v$ in algorithm $A$ running in $G$ with input function $x$ and labelling $\bar{\ell}$.

\vspace*{-2ex}

\subparagraph{Local decision.} 

In \emph{distributed decision}, the output of each node $v$ corresponds to its own individual decision. That is, each node either \emph{accepts} or \emph{rejects}. Globally, the instance $(G,x)$ is accepted  if and only if every node accepts individually. In other words, the global acceptance is subject to the logical conjunction of all the individual acceptances. For the sake of simplifying the presentation, $A(G,v,x,\bar{\ell}) = 1$ (resp., $A(G,v,x,\bar{\ell}) = 0$) denotes the fact that $v$ accepts (resp., rejects) in an execution of algorithm $A$ on $(G,x)$ labelled with $\bar{\ell}$. We say that $A$ accepts if $A(G,v,x,\bar{\ell}) = 1$ for every node $v\in V(G)$, and rejects otherwise. We will use the shorthand $A(G,x,\bar{\ell}) = 1$ to denote that $\forall v \in V(G), A(G,v,x,\bar{\ell}) = 1$, and $A(G,x,\bar{\ell}) = 0$ to denote that $\exists v \in V(G), A(G,v,x,\bar{\ell}) = 0$.

The first class in the local hierarchy considered in this paper is \emph{local decision}, denoted by $\ld{}$. A language $L$ is in $\ld{}$ if there exists a local algorithm $A$, such that for all graphs $G$, and all possible inputs $x$ on $G$, we have:
\[
(G,x) \in L \iff A(G,x) \text{ accepts.} 
\]
As an example, deciding whether $x$ is a 3-colouring of $G$ is in $\ld{}$, but deciding whether $G$ is 3-colourable is not. Note that $\ld{}$ does not refer to any labellings. The algorithm $A$ runs solely on graphs $G$ with possible inputs to the nodes. 

Finally, for sake realism, we assume that the nodes are Turing machines, and we consider only languages that are decidable in the centralized setting. This differs from previous works, in particular from the study of $\llcp{}$ by Göös and Suomela \cite{GoosS11} where no assumption is made about the computational power of the nodes. For simplicity, as this change in the model only affects theorem \ref{thm:outsideLH} (where we prove that a language is not in our hierarchy), we will abuse notation and write that $\llcp$ is the same as the first level of the local hierarchy. 

\vspace*{-2ex}

\subparagraph{Example: certifying spanning trees.} 

In a graph $G$, a \emph{spanning tree} can be encoded as a distributed data-structure $x$ such that, for every $v\in V(G)$, $x(v)$ encodes the identity of one of $v$'s neighbours (its parent in the tree), but one node $r$ for which $x(r)=\bot$ (this node is the root of the tree). Deciding whether $x$ is a spanning tree of $G$ is not in $\ld{}$. However, a spanning tree can be \emph{certified} locally as follows (see~\cite{ItkisL94}).  Given a spanning tree $x$ of $G$ rooted at node $r$, a \emph{prover} assigns label $\ell(v)=(\id(r),d(v))$ to each node~$v$, where $d(v)$ is the distance of $v$ to the root $r$ in the spanning tree $x$. Such a label is on $O(\log n)$ bits. The verification algorithm $A$ at node $v$ checks that $v$ agrees on $\id(r)$ with all its neighbours, and that $d(x(v))=d(v)-1$. If both tests are passed, then $v$ accepts, otherwise it rejects. It follows that Algorithm $A$ accepts if and only if $x$ is a spanning tree of $G$. Indeed, if $x$ is not a spanning tree of $G$, there is no way to assign ``fake'' labels  to the nodes so that all nodes accepts. This ability to certify spanning trees is a simple but powerful tool that will be used throughout the paper.  

\section{The Local Hierarchy} \label{sec:local-hierarchy}

\newcommand{\spt}{\mbox{\sc spanning-tree}}

We generalize the various classes of distributed decision from previous work into a hierarchy of distributed decision classes, in a way analogous to the polynomial hierarchy. In particular, our class $\Sigma_1^{\ld{}}$ is equivalent to the class  $\loglcp{}$ introduced by G\"{o}\"{o}s and Suomela~\cite{GoosS11} (up to the hypothesis on the computational power of the nodes).

\subsection{Definition}

We first define an infinite hierarchy  $\{(\Sigma_k^{\ld{}})_{k\ge 0},(\Pi_k^{\ld{}})_{k\ge 0}\}$ of classes. For the sake of simplifying the notation, each of these classes is now abbreviated in $\Sigma_k$ or $\Pi_k$. Informally, each class can be defined by a game between two players, called the  \emph{prover} and the \emph{disprover}. Both players are given a language $L$ and an instance $(G,x)$. In $\Sigma_k$ (resp., $\Pi_k$), with $k>0$, the prover (resp., disprover) goes first, and assigns an $O(\log n)$-bit label to each node. Then, the players alternate,  assigning $O(\log n)$-bit labels to each node in turn, until $k$ labels $\ell_1, \ell_2, \dots, \ell_k$ are assigned. A language $L$ is in the corresponding class if there is a local algorithm $A$, and a prover-disprover pair such that, given $(G,x)$, for every set of labels that the disprover assigns, the prover can always assign labels such that $A$ accepts if and only if $(G,x) \in L$. That is, if $(G,x) \notin L$, then the disprover can always force some node to reject, whatever the prover does. Such a combination local algorithm $A$ and prover-disprover pair is called a \emph{decision protocol} for $L$ in the corresponding class. Equivalently, we define $\ld{} = \Sigma_0 = \Pi_0$, and, for $k>0$, $\Sigma_k$ is defined as the set of languages $L$ for which there exists $c\geq 0$,  and  a local algorithm $A$ such that
\[
	(G,x) \in L \iff \exists \ell_1, \forall \ell_2, \dots, \Q \ell_k, A(G,x,\ell_1, \ell_2, \dots, \ell_k) = 1,
\]
where $\Q$ is the existential (resp., universal) quantifier if $k$ is odd (resp., even), and every label $\ell_i$ is of size at most $c \log n$. The class $\Pi_k$ is defined similarly, except that the acceptance condition is: 
$
	(G,x) \in L \iff \forall \ell_1, \exists \ell_2, \dots, \Q \ell_k, A(G,x, \ell_1, \ell_2, \dots, \ell_k) = 1. 
$

\vspace{-2ex}

\subparagraph{Remark.} 

For both $\Sigma_k$ and $\Pi_k$, the equivalence should hold \emph{for every identity-assignment} to the nodes with identities in $[1,N]$, where $N$ is a fixed function polynomial in~$n$. Indeed, the membership of an instance $(G,x)$ to a language is independent of the identities given to the nodes. On the other hand, the labels given by the prover and the disprover may well depend on the actual identities of the nodes in the graph where the decision algorithm $A$ is run. This is for instance the case of the protocol for certifying spanning trees described in the previous section, establishing that $\spt\in\Sigma_1$. 

\subsection{The odd-even collapsing, and the $\Lambda_k$-hierarchy}

Interestingly, the ending universal quantifier in both $\Sigma_{2k}$ and $\Pi_{2k+1}$ do not help. The class $\Pi_1$ turns out to be just slightly stronger than $\ld{}$. Specifically, we prove the following result.

\begin{theorem}\label{thm:collapse-theorem}
For every $k\geq 1$, $\Sigma_{2k} = \Sigma_{2k-1}$ and $\Pi_{2k+1} = \Pi_{2k}$. Moreover, $\ld{}\subseteq \Pi_1 \subseteq  \ld{}^{\mbox{\footnotesize \rm \#node}}$, that is,  local decision with access to an oracle providing each node with the number of nodes in the graph.
\end{theorem}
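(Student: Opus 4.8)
The plan is to isolate one technical device---\emph{absorbing a trailing universal quantifier into the verification algorithm}---and then apply it three times, the only wrinkle being whether the absorbing algorithm has access to the number of nodes $n$.

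\medskip

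\noindent\emph{The core device.} Fix a protocol whose algorithm $A$ is local with radius $t$, so that the decision of $A$ at a node $v$ depends only on the labels inside $B_G(v,t)$. I would first establish that the inner statement ``$\forall \ell_k$ (a global labelling), $A$ accepts at every node'' is equivalent to ``every node $v$ accepts under a new algorithm $A'$ which, after gathering $B_G(v,t)$, enumerates all assignments of $\ell_k$ to the nodes of $B_G(v,t)$ and accepts iff $A$ accepts at $v$ for \emph{all} of them''. The forward direction uses that any global $\ell_k$ restricts to a local assignment on $B_G(v,t)$, which is all that $A$ at $v$ sees; the backward direction uses that any local assignment extends to a global one. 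This converts a global $\forall$ into a local, finite, Turing-computable check. The one subtlety is that the absorbed labels range over $\{0,1\}^{c\log n}$, so the enumeration is only well defined once $A'$ knows $n$ (the ball is gathered, but the label \emph{length} depends on $n$).

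\medskip

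\noindent\emph{The even--odd collapses.} The inclusion $\Sigma_{2k-1}\subseteq\Sigma_{2k}$ is trivial, obtained by ignoring the last (useless) label. For the converse, the final quantifier of $\Sigma_{2k}$ is $\forall\ell_{2k}$, which I absorb by the core device; the missing ingredient $n$ is supplied for free by the prover in its first, existential, move $\ell_1$: using the $O(\log n)$-bit spanning-tree certification recalled just before the theorem (that $\spt\in\Sigma_1$), augmented with certified subtree sizes, the exact value of $n$ can be certified in $\Sigma_1$. I would bundle this certificate into $\ell_1$, have the new algorithm first check it (rejecting any forged value) and then run the enumeration with the now-known $n$. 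The equivalence is then verified on both legal and illegal instances, using that negating the trailing block flips the quantifiers, so that a disprover's rejecting global $\ell_{2k}$ surfaces as a rejecting local assignment inside the enumeration. The certificate adds only $O(\log n)$ bits, so label sizes are preserved. The identical argument gives $\Pi_{2k+1}=\Pi_{2k}$ for $k\ge 1$: here the trailing $\forall\ell_{2k+1}$ is absorbed, and $n$ is certified in the prover's first move $\ell_2$, which exists precisely because $k\ge 1$.

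\medskip

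\noindent\emph{The $\Pi_1$ sandwich.} The inclusion $\ld\subseteq\Pi_1$ is immediate: run the $\ld$-algorithm while ignoring $\ell_1$. For $\Pi_1\subseteq\ld^{\#\mathrm{node}}$, I again absorb the single $\forall\ell_1$ through the core device, but now there is no earlier existential move by which the prover could certify $n$---this is exactly why the collapse halts one level higher than in the $\Sigma$ and higher $\Pi$ cases. Granting $n$ through the oracle removes the obstruction: the resulting oracle-algorithm gathers $B_G(v,t)$, enumerates all $c\log n$-bit assignments of $\ell_1$ to the ball, and accepts iff $A$ accepts for all of them. I expect the main obstacle to be precisely this bookkeeping around $n$: arguing cleanly why the enumeration needs $n$, why any prover move can certify it (making the collapse total throughout the $\Sigma$-hierarchy and for $\Pi_{2k+1}$ with $k\ge 1$), and why its absence at $\Pi_1$ forces the weaker bound $\ld^{\#\mathrm{node}}$ rather than $\ld$.
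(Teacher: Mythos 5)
Your proposal is correct and follows essentially the same route as the paper: certify $n$ via a spanning tree with subtree sizes using the earliest existential label, then absorb the trailing universal quantifier by having each node enumerate all $(c\log n)$-bit labellings of its radius-$t$ ball, with the oracle supplying $n$ in the $\Pi_1$ case where no prior existential move exists. Your explicit restriction/extension argument for why the local enumeration captures the global universal quantifier is a welcome elaboration of a step the paper treats more tersely, but it is the same proof.
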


\begin{proof}
The result follows from the fact that an existential quantification on labels of size $O(\log n)$ bit is sufficient to provide the nodes with the exact size of the graph using a spanning tree as described in Section~\ref{sec:preliminaries}. 

\begin{claim}\label{claim:spanningtree}
Let $L=\{(G,x): \mbox{for every $v\in V(G)$, $x(v)=|V(G)|$}\}$. We have $L\in\Sigma_1$. 
\end{claim}

To establish the claim, on a legal instance, let $T$ be a spanning tree of $G$, and root $T$ at node $r$. Let us set label $\ell(v)=(\id(r),p(v),s(v))$ where $p(v)$ is the identity of the parent of $v$ in $T$, and  $s(v)$ is the size of the subtree of $T$ rooted at $v$.  The verification proceeds as follows: each node $v$ checks that it agrees on $\id(r)$ and $x(v)$ with all its neighbours in the graph, and that $s(v)=1+\sum_{w \in p^{-1}(v)}s(w)$ where $p^{-1}(v)$ denotes the set of $v$'s children, i.e., all neighbours $w$ of $v$ such that $p(w)=v$. In addition, the root $r$ checks that $s(r)=x(r)$. If all tests are passed, then the node accepts, otherwise it rejects. It follows that this algorithm accepts if and only if $x(v)=n$ for all nodes $v$. This completes the proof of the claim. 

We show how to use this mechanism in the case of $\Sigma_{2k}$, for $k> 0$. Let $L\in \Sigma_{2k}$, and let $A$ be a $t$-round local algorithm such that:
\[
(G,x) \in L \iff \exists \ell_1, \forall \ell_2, \dots, \exists \ell_{2k-1}, \forall \ell_{2k}, A(G,x,\ell_1, \ell_2, \dots, \ell_{2k}) = 1.
\]
Recall that all labellings $\ell_i$, $i=1,\dots,2k$, are of size at most $c \log n$ for some $c\geq 0$. We construct an algorithm $A'$ that simulates $A$ for a protocol that does not need the last universal quantifier on $\ell_{2k}$. The first labelling $\ell'_1$ consists of some correct $\ell_1$ for $A$, with the aforementioned additional label that encodes a spanning tree $x'$ (rooted at an arbitrary node) and the value of the number of nodes in $G$. Regarding the remaining labellings, for each $\ell_{2i-1}$ assigned by the disprover, the prover assigns $\ell_{2i}$ as in the protocol for $A$, ignoring the bits padded to $\ell_1$ for creating $\ell'_1$. After the labellings have been assigned, each node $v$ gathers its radius-$t$ neighbourhood $B_G(v,t)$. Then, it virtually assigns every possible combination of $(c \log n)$-bit labellings $\ell_{2k}(u)$ to each node $u \in B_G(v,t)$, and simulates $A$ at $v$ to check whether it accepts or rejects with this labelling. If every simulation accepts, then $A'$ accepts at $v$,  else it rejects. Since every nodes generate all possible $\ell_{2k}$ labellings in its neighbourhood, we get that 
\[
	(G,x) \in L \iff \exists \ell'_1, \forall \ell_2, \dots, \exists \ell_{2k-1}, A(G,\ell_1, \ell_2, \dots, \ell_{2k-1}) \text{ accepts}~, 
\]
which places $L$ in $\Sigma_{2k-1}$. 
The proof of $\Pi_{2k+1}=\Pi_{2k}$ is similar by using the first existential quantifier (which appears in second position) to certify the number of nodes in the graph. For the case of $\Pi_1$, the nodes use the value of the number of nodes directly provided by the oracle \#node. 
\end{proof}

A consequence of Theorem~\ref{thm:collapse-theorem} is that only of the classes $\Sigma_k$ for odd $k$, and $\Pi_k$ for even $k$, are worth investigating.

\begin{definition}
We define the classes $(\Lambda_k)_{k\ge 0}$ as follows: 
$
\Lambda_k = \left \{ \begin{array}{ll}
 \Sigma_k & \mbox{if $k$ is odd;} \\
 \Pi_k & \mbox{otherwise}.
 \end{array} \right.
$
\end{definition}

In particular, $\Lambda_0 = \Pi_0= \ld{}$. By definition, we get $\Lambda_k \subseteq \Lambda_{k+1}$ for every $k\geq 0$, as the distributed algorithm can simply ignore the first label. 

\begin{definition}
The local hierarchy is defined as $\lh{}=\cup_{k\geq 0} \Lambda_k$.
\end{definition}

\subsection{Complementary classes} \label{ssec:compl-classes}

We define the complement classes $\co{k}$, for $k\geq 0$, as
$
\co{k} = \{ L : \bar{L} \in \Lambda_k \}.
$
Note that, due to the asymmetric nature of distributed decision (unanimous acceptance, but not rejection), simply reversing the individual decision of an algorithm deciding $L$ is generally not appropriate to decide $\bar{L}$. Nevertheless, we show that an additional existential quantifier is sufficient to reverse any decision, implying the following theorem.

\begin{theorem} \label{thm:cotheorem}
For every $k\geq 0$, $\co{k} \subseteq \Lambda_{k+1}$.
\end{theorem}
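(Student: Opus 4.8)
The plan is to exploit the asymmetry of distributed decision head-on: the only obstacle to complementing a $\Lambda_k$-protocol by naively flipping the quantifiers and negating the verifier is that ``every node accepts'' negates to ``some node rejects,'' which is not an admissible (unanimous) acceptance condition. I would show that a single extra existential quantifier suffices to repair exactly this defect.

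First I would fix $L\in\co{k}$, so that $\bar L\in\Lambda_k$ via a local algorithm $A$ and a quantifier prefix $Q_1\ell_1\dots Q_k\ell_k$. By the very definition of $\Lambda_k$ (namely $\Sigma_k$ for odd $k$ and $\Pi_k$ for even $k$) this prefix always ends with an existential quantifier, so $Q_k=\exists$. Negating the defining equivalence of $\bar L$ gives
\[
(G,x)\in L \iff \bar Q_1\ell_1\,\bar Q_2\ell_2\dots\bar Q_k\ell_k,\ A(G,x,\ell_1,\dots,\ell_k)=0,
\]
where each quantifier is flipped and the inner condition is now ``$A$ rejects,'' i.e.\ some node outputs $0$. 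A parity check shows the new leading quantifier $\bar Q_1$ is precisely the leading quantifier of $\Lambda_{k+1}$: when $k$ is odd, $\Lambda_k=\Sigma_k$ starts with $\exists$, its negation starts with $\forall$, and $\Lambda_{k+1}=\Pi_{k+1}$ starts with $\forall$; when $k$ is even the situation is mirror-symmetric. Since $Q_k=\exists$ forces $\bar Q_k=\forall$, the $(k{+}1)$-st slot of a $\Lambda_{k+1}$-formula is again existential, so the flipped prefix is exactly the length-$k$ prefix of a $\Lambda_{k+1}$-formula to which I am entitled to append one final existential labelling $\ell_{k+1}$.

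The heart of the argument is to convert the rejection condition ``$A(G,x,\ell_1,\dots,\ell_k)=0$'' into a unanimous-acceptance condition using this last existential label. For each fixed $\ell_1,\dots,\ell_k$, I would have the prover use $\ell_{k+1}$ to certify, exactly as in Claim~\ref{claim:spanningtree} and Section~\ref{sec:preliminaries}, a spanning tree rooted at a distinguished node $v_0$; the new verifier $A'$ then has every node (i) simulate $A$ on $(\ell_1,\dots,\ell_k)$, (ii) check the tree-consistency conditions, and in addition has the root $v_0$ check that it itself rejects in $A$, i.e.\ $A(G,v_0,x,\ell_1,\dots,\ell_k)=0$. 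I claim that, pointwise in $\ell_1,\dots,\ell_k$, there is a choice of $\ell_{k+1}$ making $A'$ accept if and only if $A$ rejects. For completeness, if $A$ rejects then some node $v_0$ outputs $0$ and the prover roots a genuine spanning tree at $v_0$, so all checks pass. For soundness, if $A$ accepts everywhere then the rigidity of spanning-tree certification forces exactly one root; whichever node that is accepts under $A$, its extra check fails, and $A'$ rejects, while a malformed tree label likewise makes some consistency check fail at some node. Substituting this equivalence under the $k$ outer quantifiers yields a bona fide $\Lambda_{k+1}$-protocol for $L$.

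I expect the main obstacle to be precisely this soundness direction: I must rule out any way for the prover to ``point to'' a non-rejecting node or to exploit a fake tree. This is exactly what the unique-root property and the $O(\log n)$-bit local checkability of spanning trees guarantee, so the key is to invoke that certification carefully rather than to perform any new calculation. Finally I would note that the case $k=0$ is the same argument with an empty outer prefix, directly placing the complement of any $\ld$ language into $\Sigma_1=\Lambda_1$, and that $A'$ remains local, since it only adds constant-round tree checks to the $t$-round simulation of $A$.
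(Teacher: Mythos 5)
Your proposal is correct and follows essentially the same route as the paper's proof: negate the $\Lambda_k$ characterization of $\bar L$, flip the quantifier prefix (noting it then ends in $\forall$ so a genuine extra existential slot is available), and use the final existential label to certify a spanning tree rooted at a rejecting node whose root re-simulates $A$ on its radius-$t$ ball and inverts its decision, with soundness resting on the unique-root rigidity of spanning-tree certification. Your pointwise-equivalence-then-substitute framing is a slightly cleaner way of stating what the paper phrases as the prover and disprover switching roles, but it is the same argument.
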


\begin{proof}
The proof uses a spanning tree certificate to reverse the decision, in a way similar to the proof that the complement of $\ld{}$ is contained in $\loglcp{}$ (i.e., according to our terminology, $\co{0} \subseteq \Lambda_1$) due to G\"{o}\"{o}s and Suomela~\cite{GoosS11}. Let $L \in \Lambda_k$, and let $A$ be a $t$-round local  algorithm deciding $L \in \Lambda_k$ using labels on at most $c\log n$ bits. We  construct an algorithm $A'$ which simulates $A$, but uses an additional label $\ell_{k+1}$ to reverse the decisions made by $A$. Let us assume that $k$ is even (as it will appear clear later, the proof is essentially the same for $k$ odd).  We have that
\[	
(G,x) \in L \iff \forall \ell_1, \exists \ell_2, \dots, \exists \ell_{k}, A(G,x,\ell_1, \ell_2, \dots, \ell_k) = 1,
\]
with all labels $\ell_i$'s of size at most $c \log n$ for some constant $c\geq 0$. In Algorithm $A'$, the prover and the disprover essentially switch their roles. From the above, we have 
\[
 	(G,x) \notin L \iff \exists \ell_1, \forall \ell_2, \dots, \forall \ell_k, \exists v \in G, A(G,v,x,\ell_1, \ell_2, \dots, \ell_k) = 0.
\]
The prover for $A'$ always follows the disprover for $A$, and can always pick labellings $\ell_1, \ell_3, \dots, \ell_{k-1}$ such that there is a rejecting node if and only if $(G,x)\notin L$. In the protocol for $A'$, the prover sets $\ell_{k+1}$ to be a spanning tree rooted at one such rejecting node~$v$. Every other node $u \neq v$ simply checks that $\ell_{k+1}$ constitutes a proper encoding of a spanning tree, and rejects if not. If all nodes $u \neq v$ accept, then $\ell_{k+1}$ is indeed a proper spanning tree, and it only remains to check that $v$ rejects in $A$. To this end, the node $v$ designated as the root of the spanning tree encoded by $\ell_{k+1}$ gathers all labellings in its radius-$t$ neighbourhood, and computes $A(G,x,v,\ell_1, \ell_2, \dots, \ell_k)$. If $A$ rejects at $v$, we set $A'$ to accept at $v$, and, otherwise, we set $A'$ to reject at $v$.

As discussed in Section~\ref{sec:preliminaries}, the spanning tree can be encoded using $O(\log n)$ bits. All labellings $\ell_1, \ell_2, \dots, \ell_k$ have size at most  $c \log n$, therefore all labels of $A'$ are of size at most $c' \log n$ for some $c'\geq c$. The protocol is correct, as a rejecting node exists in $A$ if and only if $(G,x) \notin L$, and $A'$ correctly accepts in this case. If $(G,x) \in L$, then we have that, for every choice the prover can make, the disprover can always choose its labellings so that $A$ accepts. Thus, if the spanning tree $\ell_{k+1}$ is correct, the root of that tree will indeed detect that it is an accepting node in $A$, and so reject in $A'$.
\end{proof}

\begin{corollary}
For every $k\geq 0$,  $\co{k} \subseteq \co{k+1}$, and $\Lambda_k \subseteq \co{k+1}$.
\end{corollary}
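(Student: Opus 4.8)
The plan is to derive both inclusions formally from two facts already available in the excerpt: the monotonicity $\Lambda_k \subseteq \Lambda_{k+1}$ (established right after the definition of the $\Lambda_k$-hierarchy, since the decision algorithm may simply ignore the first label), and Theorem~\ref{thm:cotheorem}, which gives $\co{k} \subseteq \Lambda_{k+1}$. The only genuine bookkeeping is to track the defining equivalence $L \in \co{k} \iff \bar{L} \in \Lambda_k$ together with the involutivity of complementation, namely $\overline{\bar{L}} = L$. No new protocol needs to be constructed.

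First I would handle $\co{k} \subseteq \co{k+1}$. Let $L \in \co{k}$. By definition this means $\bar{L} \in \Lambda_k$. Applying monotonicity $\Lambda_k \subseteq \Lambda_{k+1}$ yields $\bar{L} \in \Lambda_{k+1}$, which is exactly the statement $L \in \co{k+1}$. Hence the complement classes inherit monotonicity directly from the $\Lambda$-hierarchy, and this first half is purely a restatement of $\Lambda_k \subseteq \Lambda_{k+1}$ under complementation.

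Next I would handle $\Lambda_k \subseteq \co{k+1}$. Suppose $L \in \Lambda_k$. The key step is to observe that, instantiating the definition of $\co{k}$ at the language $\bar{L}$, we have $\bar{L} \in \co{k} \iff \overline{\bar{L}} \in \Lambda_k \iff L \in \Lambda_k$; so $L \in \Lambda_k$ is equivalent to $\bar{L} \in \co{k}$. Thus $\bar{L} \in \co{k}$, and by Theorem~\ref{thm:cotheorem} we obtain $\bar{L} \in \Lambda_{k+1}$. By the definition of the complement class, $\bar{L} \in \Lambda_{k+1}$ means precisely $L \in \co{k+1}$, which is what we wanted. In effect this inclusion is Theorem~\ref{thm:cotheorem} read backwards through the complement operation.

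Since the entire argument is class-theoretic, I do not expect any real technical obstacle; the single point that requires care is not conflating $\co{k+1}$ with $\co{k}$ applied to a complement, and making sure the double complement $\overline{\bar{L}} = L$ collapses correctly at each step. Once the equivalence $L \in \Lambda_k \iff \bar{L} \in \co{k}$ is stated cleanly, both inclusions fall out immediately from monotonicity and Theorem~\ref{thm:cotheorem} respectively.
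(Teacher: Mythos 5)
Your argument is correct and is essentially identical to the paper's own proof: both inclusions are obtained by unwinding the definition $L \in \co{k} \iff \bar{L} \in \Lambda_k$ and invoking, respectively, the monotonicity $\Lambda_k \subseteq \Lambda_{k+1}$ and Theorem~\ref{thm:cotheorem}. No further comment is needed.
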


\begin{proof}
If $L \in \co{k}$, then, by definition, $\bar{L} \in \Lambda_k$, and thus also $\bar{L} \in \Lambda_{k+1}$, which implies that $L \in \co{k+1}$. If $L \in \Lambda_k$, then $\bar{L} \in \co{k}$, and thus, by Theorem~\ref{thm:cotheorem}, we get that $\bar{L} \in \Lambda_{k+1}$, which implies that $L \in \co{k+1}$. 
\end{proof}

The following theorem shows that, for every $k\geq 0$, and every language $L$ in $\Lambda_k \cap \co{k}$, there is an algorithm deciding $L$ such that an instance $(G,x)\in L$ is accepted at all nodes, and an $(G,x)\notin L$ is rejected at all nodes.

\begin{theorem} \label{thm:intersection}
Let $k\geq 1$, and let $L\in \Lambda_k \cap \co{k}$. Then there exists a local algorithm $A$ such that, for every instance $(G,x)$, and for every $v\in V(G)$, 
\[
(G,x) \in L \iff \left \{ \begin{array}{ll}
\forall \ell_1, \exists \ell_2,\forall \ell_3, \dots, \exists \ell_k, A(G,v,x,\ell_1,\dots,\ell_k) = 1 & \mbox{if $k$ is even}\\
\hspace*{4.5ex} \exists \ell_1, \forall \ell_2, \dots, \exists \ell_k, A(G,v,x,\ell_1,\dots,\ell_k) = 1 & \mbox{otherwise}
\end{array}\right.
\]
\end{theorem}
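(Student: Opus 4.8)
The plan is to build a single level-$k$ protocol for $L$ that is unanimous on \emph{both} sides by fusing the two protocols that the hypothesis hands us. First I would fix a protocol $A_+$ witnessing $L\in\Lambda_k$ (on an instance in $L$ the prover can force every node to accept, while on an instance outside $L$ every prover strategy leaves at least one rejecting node), and, since $L\in\co{k}$ means $\bar L\in\Lambda_k$, a protocol $A_-$ for $\bar L$ (on an instance outside $L$ the $\bar L$-prover can force every node to accept in $A_-$, whereas on an instance in $L$ some $A_-$-node always rejects). The combined object will again be a $\Lambda_k$ game for $L$ — a $\Sigma_k$ game when $k$ is odd and a $\Pi_k$ game when $k$ is even — whose label at each level packs one label of $A_+$, one label of $A_-$, and spanning-tree bookkeeping, keeping every label on $O(\log n)$ bits.

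The key steps, in order, are as follows. Unanimity on the \emph{accept} side comes essentially for free: global acceptance is the conjunction of the local acceptances, so once the prover drives $A_+$ to accept at every node of a legal instance, the combined algorithm accepts everywhere. The substance is to force unanimous \emph{rejection} on illegal instances, and for this I would exploit $A_-$: on an illegal instance the $\bar L$-side can make $A_-$ accept at \emph{every} node, and this ``all-$A_-$-accept'' event, being conjunctive, is detected locally by each node from its own $A_-$-decision. I would therefore have a node reject as soon as it witnesses that the instance is being certified illegal, and accept only when its $A_+$-decision accepts \emph{and} no such illegality certificate is established; a spanning tree — planted in the final, existential, label exactly as the ``tree rooted at a rejecting node'' of Section~\ref{sec:preliminaries} — then propagates the single decisive local event to all nodes so that both verdicts become genuinely unanimous. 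Finally I would check the per-node equivalence in both directions and bound the label sizes by $O(\log n)$.

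The hard part will be the \emph{quantifier alignment}. Running $A_+$ places the membership-defending player on the same parity as the combined prover, so $A_+$ runs natively; but the non-membership-defending player of $A_-$ naturally sits on the opposite parity, because complementation reverses the order of the quantifiers. A naive parallel simulation of $A_-$ is thus one alternation short — precisely the extra quantifier that Theorem~\ref{thm:cotheorem} had to spend in order to reverse a decision. The crux is to absorb this missing alternation \emph{without} leaving level $k$, and my plan is to reuse the final existential move (the last label is existential for \emph{both} parities of $\Lambda_k$) to carry the certifying spanning tree, in the same spirit as the spanning-tree arguments behind Theorems~\ref{thm:collapse-theorem} and~\ref{thm:cotheorem}. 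Once this alignment is in place, verifying the equivalence node by node, and handling $k$ even versus $k$ odd (whose only difference is the leading quantifier), should be routine.
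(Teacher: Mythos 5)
Your starting point is right---take a $\Lambda_k$ protocol $A_+$ for $L$ and a $\Lambda_k$ protocol $A_-$ for $\bar L$, and obtain unanimous rejection by exploiting $A_-$'s unanimous acceptance on illegal instances---but the step you single out as the crux, the ``quantifier alignment'', rests on a misconception, and the device you propose to fix it is both unnecessary and aimed at the wrong event. The class $\co{k}$ is \emph{defined} by $\bar L\in\Lambda_k$, so $A_-$ is not obtained by negating a level-$k$ formula for $L$: it has exactly the same quantifier prefix as $A_+$, with its existential labels at exactly the same positions. There is no missing alternation to absorb. Moreover, the event to be detected on an illegal instance is ``$A_-$ accepts at \emph{every} node'', which is conjunctive and therefore already visible to each node through its own $A_-$-decision; pointwise negation turns it into unanimous rejection with no communication whatsoever. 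The spanning tree of Theorem~\ref{thm:cotheorem} is needed only to convert an \emph{existential} event (``some node rejects'') into a universal one, which is not the situation here; planting a tree rooted at a rejecting node in the last label would certify the wrong thing.

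What the paper's proof actually hinges on, and what your sketch omits, is the disambiguation mechanism: a node cannot tell from the labels alone whether they are meant to certify legality via $A_+$ or illegality via $A_-$, yet it must negate its decision in the second case and not in the first. The paper reserves one bit of the last label $\ell_k$ (existential for both parities, as you correctly note) as a flag selecting which of $B=A_+$ or $B'=A_-$ is simulated; each node checks that its neighbourhood agrees on the flag, outputs $B$'s decision under flag $B$, and the \emph{negation} of $B'$'s decision under flag $B'$. The remaining work---absent from your proposal---is to rule out cheating via a consistently wrong flag: if $(G,x)\in L$ but the flag is everywhere $B'$, then since $(G,x)\notin\bar L$ the disprover can force some node to reject in $B'$ and hence to accept in $A$, so the instance is not unanimously rejected; symmetrically, if $(G,x)\notin L$ under flag $B$, some node can be forced to reject. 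I would drop the spanning tree and the parallel packing of both protocols' labels, and redirect the effort to the flag-consistency check and these two wrong-flag cases.
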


\begin{proof}
Assume first that $k$ is even. Since $L\in \Lambda_k \cap \co{k}$, there exist two local algorithms $B$ and $B'$ such that
\begin{equation} \label{eq:strong-accept}
(G,x) \in L \iff \forall \ell_1, \exists \ell_2, \dots, \exists \ell_k, \forall v, B(G,v,x,\bar{\ell}) = 1,
\end{equation}
and
\begin{equation} \label{eq:strong-reject}
(G,x) \notin L \iff \forall \ell_1, \exists \ell_2, \dots, \exists \ell_k, \forall v, B'(G,v,x,\bar{\ell}) = 1.
\end{equation}
Now we construct the following protocol for deciding $L$ in an unanimous manner. If $(G,x) \in L$, then the prover assigns the labels as in~\eqref{eq:strong-accept}. Instead, if $(G,x) \notin L$, then the prover  assigns the labels as in~\eqref{eq:strong-reject}. In addition, the first bit of $\ell_k$ tells which algorithm the nodes should use, with $0$ for $B$ and $1$ for~$B'$.

Now, the decision algorithm $A$ proceeds as follows. For each node $v$, if $v$ and all neighbours of $v$ have the same flag $B$ or $B'$, then $v$ simulates $B$ (with the first bit of $\ell_k$ ignored), and outputs $B(G,v,x,\ell_1,\dots,\ell_k)$. Conversely, $v$ and all the neighbour of $v$ have the same flag $B'$, then $v$ simulates $B'$ and outputs $B'(G,v,x,\ell_1,\dots,\ell_k)$. Finally, if the neighbourhood of $v$ contains both flags $B$ and $B'$, then that node outputs the value of its own flag (i.e., 0 or 1). In other words, any $\ell_k$-labelling with non-consistent flag is rejecting, though not unanimously so.

It remains to check that $A$ behaves correctly when the flag in $\ell_k$ is consistent. By construction, from~\eqref{eq:strong-accept} and~\eqref{eq:strong-reject}, and from the way $A$ uses $B$ and $B'$, we get that the left to right implication in the statement of the theorem is satisfied: there is always a way to label to graph so that \emph{yes}-instances are accepted everywhere, and \emph{no}-instances are rejected everywhere. 

For the other direction, let $f(v)$ denote the flag-bit of the label $\ell_k$ at node $v$. We consider the two cases of whether $(G,x)\in L$ or not. First, assume that $(G,x) \in L$, but $f(v) = B'$ for all nodes $v$. Since every node is simulating $B'$ and reversing its decision, it follows from~\eqref{eq:strong-reject} that
\[
(G,x) \in L \iff \exists \ell_1, \forall \ell_2, \dots, \forall \ell_k, \exists v, B'(G,v,x,\ell_1, \ell_2, \dots,\ell_k) = 0.
\]
That is, if $(G,x) \in L$, then the disprover can force \emph{some} node to accept, even if all nodes are consistently running the wrong algorithm, $B'$.

Conversely, assume that $(G,x) \notin L$, but $f(v)=B$ for all nodes $v$. Similarly to the previous case, since every node is simulating $B$, it follows from~\eqref{eq:strong-accept} that
\[
(G,x) \notin L \iff \exists \ell_1, \forall \ell_2, \dots, \forall \ell_k, \exists v, A(G,v,x,\ell_1, \ell_2, \dots, \ell_k) = 0,	
\]  
That is, if $(G,x) \notin L$, then the disprover can force \emph{some} node to reject. 

The case for $k$ odd is similar. 
\end{proof}

In Theorem~\ref{thm:optimization} in the next section, we shall see several example of languages in $\Lambda_1 \cap \co{1}$, in relation with classical optimization problems on graphs. By Theorem~\ref{thm:intersection}, all of these languages can be decided unanimously. 

\subsection{Separation results}

From the previous results in this section, we get that the local hierarchy $\lh{}=\cup_{k\geq 0} \Lambda_k$ has a typical ``crossing ladder'' as depicted on Figure~\ref{fig:lh-structure}.

\begin{figure}[htb]
	\centering
	\includegraphics[width=0.6\textwidth]{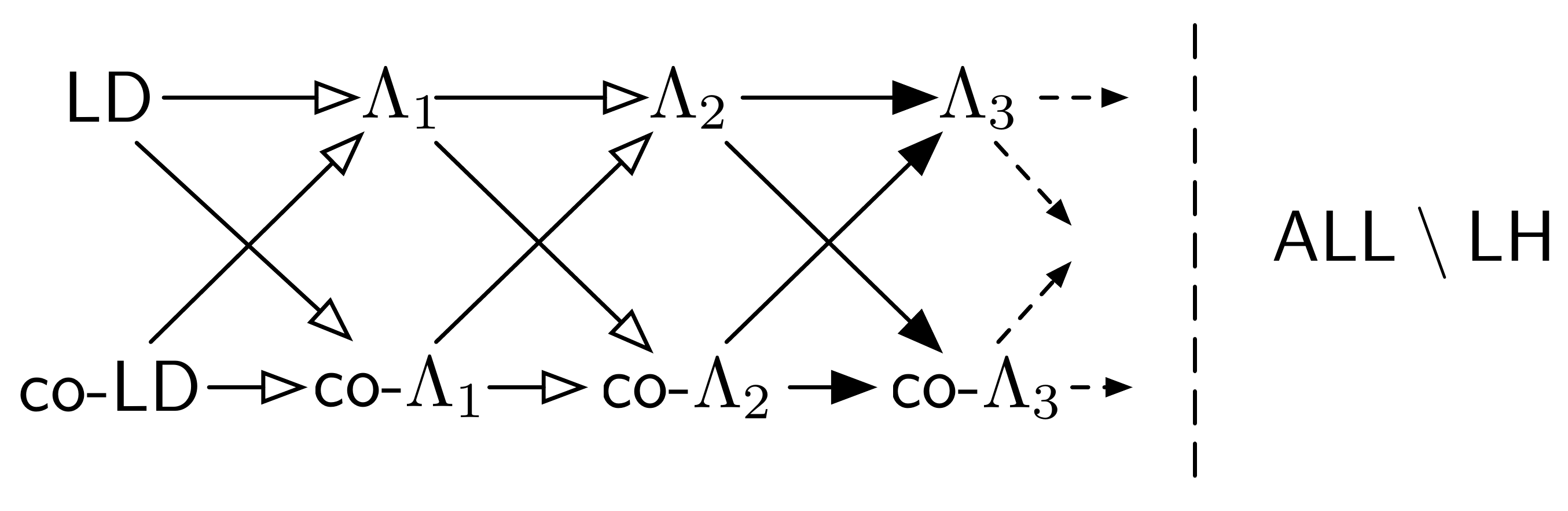}
	\caption{\sl Structure of the local hierarchy. Arrows indicate inclusions, while hollow-headed arrows indicate strict inclusions.}
	\label{fig:lh-structure}
\end{figure}

In addition, we can show that some of the inclusions are strict. Indeed,  it is known for long that $\ld{}$ is strictly included in $\Lambda_1$ (for instance, \textsc{2-colourability} $\in \Lambda_1 \setminus \ld{}$). Also, $\Lambda_0 \cup \co{0}$ is strictly included in $\co{1}$. Indeed, for instance,  \textsc{non-3-colourability} $\in \co{1} \setminus (\Lambda_0 \cup \co{0})$. Therefore, all inclusions between $\ld$ and co-$\ld$ and the classes at the first level are strict. Moreover, it is known~\cite{GoosS11} that \textsc{non-3-colourability}~$\notin\Lambda_1$, implying that \textsc{3-colourability}~$\notin \co{1}$. On the other hand, both languages are in $\Lambda_2$, by application of Theorem~\ref{thm:cotheorem}. As a consequence, both are also in $\co{2}$. Therefore, all inclusions between the classes at the first and second levels are strict. 

For $k\geq 2$, separating the classes at the $k$th level from the classes at the next level appears to be not straightforward. In particular, all classical counting arguments used to separate the three first levels (i.e., levels~0, 1, and~2) fail. On the other hand, we show that if $\Lambda_k=\Lambda_{k+1}$ for some $k$, then $\lh$ collapses to the $k$th level.

\begin{theorem} \label{thm:collapse}
If there exists $k\geq 0$ such that $\Lambda_k = \Lambda_{k+1}$, then $\Lambda_i = \Lambda_k$ for all $i > k$, that is, $\lh$ collapses at the $k$th level.
\end{theorem}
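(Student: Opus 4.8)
The plan is to argue by induction that a single collapse propagates one level at a time: it suffices to establish the implication $\Lambda_j = \Lambda_{j+1} \Rightarrow \Lambda_{j+1} = \Lambda_{j+2}$ for every $j \geq k$. Iterating this from the hypothesis $\Lambda_k = \Lambda_{k+1}$ then yields $\Lambda_j = \Lambda_{j+1}$ for all $j\geq k$, and hence $\Lambda_i = \Lambda_k$ for every $i>k$, which is the claim. Since the inclusion $\Lambda_{j+1}\subseteq \Lambda_{j+2}$ is free (the algorithm ignores the first label), the only thing to prove at each step is $\Lambda_{j+2}\subseteq \Lambda_{j+1}$.

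Fix $j\geq k$ and assume $\Lambda_j = \Lambda_{j+1}$. The classes $\Lambda_j$ and $\Lambda_{j+2}$ have the same parity, hence the same leading quantifier, which I denote $\mathsf{Q}_0$ ($\exists$ if $j$ is odd, $\forall$ if $j$ is even), while $\Lambda_{j+1}$ leads with the opposite quantifier $\overline{\mathsf{Q}}_0$. Take $L \in \Lambda_{j+2}$, witnessed by a local algorithm $A$ and a quantifier prefix $\mathsf{Q}_0\ell_1\,\overline{\mathsf{Q}}_0\ell_2\cdots$ of length $j+2$. I would peel off the first quantifier and fold its label into the input: define the distributed language $L'$ over instances $(G,(x,\ell_1))$ by retaining only the last $j+1$ quantifiers $\ell_2,\dots,\ell_{j+2}$ together with the same algorithm $A$. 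Because $\ell_1$ is $O(\log n)$ bits, $(x,\ell_1)$ is a legitimate input, and the residual prefix $\overline{\mathsf{Q}}_0\ell_2\cdots$ of length $j+1$ (leading with $\overline{\mathsf{Q}}_0$ and, like every $\Lambda$-class, ending with $\exists$) is exactly of $\Lambda_{j+1}$-type, so $L'\in\Lambda_{j+1}$.

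Now I apply the hypothesis. Since $\Lambda_{j+1}=\Lambda_j$, the language $L'$ also lies in $\Lambda_j$, hence is witnessed by some local algorithm $B$ and a $\Lambda_j$-prefix $\mathsf{Q}_0 m_1\,\overline{\mathsf{Q}}_0 m_2\cdots$ of length $j$; crucially this prefix leads with the very same quantifier $\mathsf{Q}_0$ that I peeled off. Substituting back gives the characterization $(G,x)\in L \iff \mathsf{Q}_0\ell_1\,\mathsf{Q}_0 m_1\,\overline{\mathsf{Q}}_0 m_2\cdots : B = 1$, whose two leading quantifiers are identical. Two adjacent like quantifiers merge into one over the concatenated label $(\ell_1,m_1)$, still of size $O(\log n)$, and $B$ can recover the two components by parsing. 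This produces a prefix of length $j$ of $\Lambda_j$-type, so $L\in \Lambda_j = \Lambda_{j+1}$, proving $\Lambda_{j+2}\subseteq\Lambda_{j+1}$ and closing the induction step.

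I expect the main obstacle to be the folding step: one must check that the collapse hypothesis, which is a statement about all distributed languages, legitimately applies to the parametrized language $L'$ whose input has been augmented by $\ell_1$. This is precisely where the $O(\log n)$ bound on both inputs and labels is essential — it guarantees that $(x,\ell_1)$ remains a valid instance and that the subsequent merge of two $O(\log n)$-bit labels does not blow up the label size beyond $c\log n$. The quantifier bookkeeping (verifying, from the parity of $j$ and the odd–even normalization underlying the definition of $\Lambda_k$, that the residual prefix is $\Lambda_{j+1}$-type and that the merged prefix is $\Lambda_j$-type) is routine once the leading- and trailing-quantifier conventions are laid out explicitly.
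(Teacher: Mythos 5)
Your proof is correct and follows essentially the same route as the paper's: fold the outermost label into the input to define an auxiliary language one level down, apply the collapse hypothesis to it, and then merge the two adjacent like quantifiers (possible because $\Lambda_j$ and $\Lambda_{j+2}$ share the same leading quantifier). The only differences are presentational --- you argue by direct induction and treat both parities uniformly, whereas the paper argues by contradiction at a single level and spells out only the odd case.
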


\begin{proof}
Let us assume for the purpose of contradiction, that there exists $k\geq 0$ such that $\Lambda_k= \Lambda_{k+1} \neq \Lambda_{k+2}$. Let $L \in \Lambda_{k+2} \setminus \Lambda_{k+1}$. Let is assume that $k$ is odd (as it will appear clear later, the proof for $k$ even is similar). Since $L \in \Lambda_{i+2}$, there exists a local algorithm $A$ such that 
\[
(G,x) \in L \iff \exists \ell_1, \forall \ell_2, \dots, \exists \ell_{k+2}, A(G,x,\ell_1, \ell_2, \dots \ell_{k+2}) = 1.
\]
We then define the language $\tilde{L}$ as
\[
(G, (x,\ell)) \in \tilde{L} \iff \forall \ell_1, \exists \ell_2, \dots, \exists \ell_{k+1} A(G,x,\ell, \ell_1, \dots, \ell_{k+1}) = 1.
\]
By this definition, we get $\tilde{L} \in \Lambda_{k+1}$. Now, since $\Lambda_k = \Lambda_{k+1}$, we get that there exists a local algorithm $B$ such that
\[
(G, (x,\ell)) \in \tilde{L} \iff \exists \ell_1, \forall \ell_2, \dots, \exists \ell_k, B(G,x,\ell,\ell_1, \ell_2, \dots, \ell_k) = 1.
\]	
On the other hand, by definition, $(G,x) \in L$ if and only if there exists $\ell$ such that  $(G,(x,\ell)) \in \tilde{L}$. Using Algorithm $B$, the latter is equivalent to
\[
\exists \ell, \exists \ell_1, \forall \ell_2, \dots, \exists \ell_k, B(G,(x,\ell),\ell_1,\ell_2,\dots,\ell_k) = 1.	
\]
Now, the two existential quantifiers on $\ell$ and $\ell_1$ can be combined into one to get a protocol establishing  $L\in\Lambda_k$, a contradiction.
\end{proof}

Finally, we show that there are languages outside $\lh$. In fact, this result holds, even if we restrict ourselves to languages with inputs 0 or 1 on  oriented paths, i.e., with identity-assignment where nodes are given consecutive ID from 1 to~$n$.  The result follows from the fact that there are ``only'' $2^{2^{O(\log n)}}$ different local algorithms for such $n$-node instances at any fixed level of $\lh$, while there are $2^{2^n}$ different languages on such instances.

\begin{theorem}\label{thm:outsideLH}
There exists a Turing-computable language on 0/1-labelled oriented paths that is outside $\lh$. 
\end{theorem}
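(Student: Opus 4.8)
The plan is to prove this by a counting (cardinality) argument combined with diagonalization, exactly along the lines suggested: at every fixed level there are far fewer decision protocols than there are languages. First I would restrict attention to the promised instances, namely oriented paths whose nodes carry the consecutive identities $1,\dots,n$, so that an instance of length $n$ is nothing but a bit string $x\in\{0,1\}^n$. Consequently there are exactly $2^n$ instances of length $n$, and $2^{2^n}$ distinct ``length-$n$ languages'' (subsets of $\{0,1\}^n$); a language on such instances is just a choice of one length-$n$ slice for every $n$.

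Next I would bound, for a protocol at level $\Lambda_k$, how many distinct length-$n$ behaviours it can produce. A protocol consists of a local verifier $A$ of some radius $t$, a label-size constant $c$, and the level $k$. Because $A$ is $t$-local, its verdict on a length-$n$ instance depends only on the values of $A$ on the radius-$t$ balls that can actually occur: each such ball contains at most $2t+1$ nodes, and each node carries a consecutive identity in $[1,n]$, one input bit, and $k$ labels of at most $c\log n$ bits each. Hence the number of relevant balls is at most
\[
B(n,t,c,k) \;\le\; \bigl(n\cdot 2\cdot (2n^{c})^{k}\bigr)^{2t+1} \;=\; 2^{O(tkc\log n)} \;=\; 2^{O(\log n)}
\]
for fixed $t,c,k$, so the number of distinct verifiers up to length-$n$ behaviour is at most $2^{B(n,t,c,k)}=2^{2^{O(\log n)}}$. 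Thus at any fixed level the set of realizable length-$n$ behaviours has size $2^{2^{O(\log n)}}=2^{\poly(n)}$, which is vanishingly small compared with the $2^{2^{n}}$ available length-$n$ languages.

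To defeat the whole union $\lh=\bigcup_{k}\Lambda_k$ rather than a single level, I would let the parameters grow slowly with $n$. Fix a slowly increasing bound, say $g(n)=\log n$, and let $\mathcal R_n$ be the (finite) set of length-$n$ behaviours realizable by some protocol with $k,t,c\le g(n)$. Summing the estimate above over the $g(n)^3$ parameter choices gives $|\mathcal R_n|\le 2^{2^{O((\log n)^4)}}=2^{2^{o(n)}}<2^{2^{n}}$ for all large $n$. I would then define $L^\ast$ by letting its length-$n$ slice be the lexicographically least subset of $\{0,1\}^n$ that does not lie in $\mathcal R_n$; such a slice exists precisely because $|\mathcal R_n|<2^{2^n}$. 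Crucially $\mathcal R_n$ is computable: there are only finitely many parameter triples and finitely many Boolean functions on the finite ball set, and for each candidate protocol one can evaluate its length-$n$ slice by brute force over the finite alternating quantifier game on $O(c\log n)$-bit labellings. Hence $L^\ast$ is Turing-computable. Finally, if $L^\ast$ belonged to $\Lambda_k$ through a protocol with fixed parameters $k,t,c$, then for every $n$ with $g(n)\ge\max(k,t,c)$ the length-$n$ slice of $L^\ast$ would be realizable, i.e. in $\mathcal R_n$ --- contradicting its construction. Therefore $L^\ast\notin\lh$.

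The main obstacle, and the place to be careful, is the reduction of ``how $A$ behaves on all length-$n$ instances'' to ``how $A$ behaves on the finitely many relevant balls,'' together with the claim that $\mathcal R_n$ is effectively computable despite the alternation of quantifiers. The first point is a routine consequence of $t$-locality and of the fact that identities, inputs, and labels all live in explicit finite ranges; the second rests on the observation that, for each fixed $n$, every quantifier ranges over a finite set of labellings, so the alternating formula defining acceptance can be decided by exhaustive search. A secondary point is calibrating the growth rate $g(n)$: it must tend to infinity (so that every fixed protocol is eventually captured) yet slowly enough that $|\mathcal R_n|$ stays below $2^{2^n}$; any $g$ with $g(n)^3\log n=o(n)$, such as $g(n)=\log n$, works.
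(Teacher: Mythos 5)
Your proof is correct and follows essentially the same strategy as the paper's: a per-level counting argument showing that there are far fewer realizable length-$n$ behaviours ($2^{2^{O(\mathrm{poly}\log n)}}$) than length-$n$ languages ($2^{2^n}$), followed by a slice-by-slice diagonalization against all protocols whose parameters are bounded by a slowly growing function of $n$ (the paper organizes this via the implicitly defined functions $\mu$ and $\nu$ rather than an explicit $g(n)=\log n$, but the idea is identical). The only trivial loose end is that for small $n$ the set $\mathcal{R}_n$ may exhaust all of $2^{\{0,1\}^n}$, so the lexicographically least slice outside it need not exist; this is handled by defining those finitely many slices arbitrarily, since the final contradiction only needs one sufficiently large $n$.
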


\begin{proof}
The proof is in two steps. First, using a counting argument, we show that, for any fixed set of parameters, that is, for every level $k$, every constant $c$ controlling the label size $c\log n$, and every running time $t$, there is language that cannot be recognized by a protocol with such parameters. Then we combine these languages  for various sets of parameters, for building a (Turing-computable) language that cannot be recognized by any protocol of the hierarchy.

\begin{claim}\label{lem:outsideLH}
Let $k$, $c$, and $t$ be non negative integers. There exists an integer $n$, and a language $L=L(n,k,c,t)$ on 0/1-labelled oriented paths that cannot be recognized by a protocol for $\Lambda_k$ running in $t$ rounds using labels of size at most $c\log n$ bits. 
\end{claim}

To establish the claim, notice that an algorithm is simply a mapping from all possible balls (including identifiers, inputs and labels) to binary outputs (accept or reject). On 0/1 inputs, and IDs in $[1,n]$, The number of algorithms for $\Lambda_k$ running in $t$ rounds using labels of size at most $c\log n$ bits  is at most $2^{2^{\beta\log(n)}}$, where $\beta=\beta(k,c,t)$ depends only on $k,c$, and $t$. On the other hand, the number of languages on words of size $n$ is exactly $2^{2^n}$. Let $n$ be such that 
$
2^{2^{\beta(k,c,t)\log n}}<2^{2^n}.
$ 
By the pigeon-hole principle, there exists a languages that cannot be decided by any algorithm for $\Lambda_k$ running in $t$ rounds using labels of size at most $c\log n$ bits. This completes the proof of the claim. 

\medbreak

Let  $m$ be a nonnegative integer, and let $S(n,m)$ be the set of languages on oriented paths with $n$ nodes that cannot be recognized by a protocol with $k=c=t=m$. By Claim~\ref{lem:outsideLH}, for every $m$ there exists $n$ such that $S(n,m)$ is not empty. We strengthen this by observing the following two points. First, if $S(n,m)$ is non empty, then for every $m'<m$, the set $S(n,m')$ is non-empty as well, since the protocol for $m'$ could be simulated with parameter $m$. Second, if $S(n,m)$ is non-empty, then, for every $n'>n$, the set $S(n',m)$ is also non-empty. Indeed, let $L\in S(n,m)$, and let us consider the language $L'$  composed of the set of words in $L$ padded with zeros. If $L'$ has an algorithm, then we could modify this algorithm  to get an algorithm recognizing $L$. 

Let us define $\mu(n)$ as the largest integer $m$ such that $S(n,m)$ is non-empty, and $\nu(m)$ the smallest integer $n$ such that $S(n,m)$ is non-empty. Given $n$ and $m$ such that $S(n,m)\neq \emptyset$, let $L(n,m)$ be the smallest language of $S(n,m)$ according to the lexicographic ordering. Finally, let $L=(\cup_{n\geq 1} L(n,\mu(n))$.

We first show that $L$ is a distributed language, i.e., that it is Turing-computable. We describe the algorithm deciding $L$. The algorithm, given an $n$-bit string $X$, computes $\mu(n)$ by enumerating all $m$'s in increasing order, by trying, for each of them, all local algorithms with parameter $m$, and by checking whether $S(n,m)=\emptyset$. This algorithm eventually finds $m$ such that $S(n,m)=\emptyset$, giving $\mu(n)=m-1$. Then the algorithm computes $L(n,\mu(n))$, and accepts $X$ if and only if $X\in L(n,\mu(n))$.

We complete the proof by showing that $L\notin \lh$.  Suppose, for the sake of contradiction, that $L\in \lh$. Then there exists a local algorithm $A$ deciding  $L\in \Lambda_k$, running in $t$ rounds using labels of size at most $c\log n$ bits, for some $k,c$, and $t$. Let  $m=\max\{k,c,t\}$.  We can transform $A$ to decide $L\in \Lambda_m$, running in $m$ rounds using labels of size at most $m\log n$ bits. 

Let us consider the restriction $L'$ of $L$ on words of size $\nu(m)$. By definition, $L'=L(\nu(m), \mu(\nu(m)))$, and this language cannot be recognized by a local algorithm with parameter $\mu(\nu(m))$. On the other hand, $\mu(\nu(m))\geq m$, and therefore $L'$ cannot be recognized by an algorithm of parameter $m$ either. In particular, $L'$ cannot be recognized by $A$, a contradiction. Therefore $L\notin \lh$. 
\end{proof}


\section{Positive results} \label{sec:positive-results}

In this section, we precisely identify the position of some relevant problems for distributed computing in the local hierarchy.

\subsection{Optimization problems}

Given an optimization problem $\pi$  on graphs (e.g., finding a minimum dominating set), one defines two distinct distributed languages: the language $\opt_\pi$ (resp., $\adm_\pi$) is composed of all configurations $(G,x)$ such that $x$ encodes an optimal (resp., admissible) solution for $\pi$ in graph $G$. The minimum-weight spanning tree (MST) problem, which is one of the most studied problem in the context of network computing~\cite{KormanK07,KormanKM15,KormanKP10},  is a typical example of optimization problems that we aim at considering in this section, but many other problems such as maximum independent set, max-cut, etc., are also of our interest. We show that, for any optimization problem $\pi$,  if deciding whether a candidate solution for $\pi$ is admissible is ``easy'',  and if the objective function for $\pi$ has some natural additive form, then $\opt_\pi\in \co{1}$, and thus  $\opt_\pi\in \Lambda_2$.
\begin{theorem}\label{thm:optimization}
Let $\pi$ be an optimization problem on graphs. If the  following two properties are satisfied: {\rm (a)}~$\adm_\pi\in \Lambda_1\cap \co{1}$, and {\rm (b)} the value to the objective function for $\pi$ is the sum, over all nodes, of an individual value at each node which can be computed locally and encoded on $O(\log n)$ bits, then $\opt_\pi \in \co{1}$.
\end{theorem}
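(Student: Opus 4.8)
The plan is to prove that $\overline{\opt_\pi}\in\Lambda_1=\Sigma_1$; since $\opt_\pi\in\co{1}$ means precisely that $\overline{\opt_\pi}\in\Lambda_1$, this establishes the theorem (and $\opt_\pi\in\Lambda_2$ then follows from $\co{1}\subseteq\Lambda_2$, Theorem~\ref{thm:cotheorem}). Assume without loss of generality that $\pi$ is a minimization problem, and write $f_v$ for the individual value at node~$v$ granted by hypothesis~(b), so that the objective value of a configuration is $\mathrm{val}(x)=\sum_{v}f_v(x)$, where each $f_v(x)$ is locally computable and of magnitude $\poly(n)$, hence storable on $O(\log n)$ bits. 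The first step is to characterize the complement: $(G,x)\notin\opt_\pi$ if and only if \emph{either} $x$ is not admissible, \emph{or} there exists an admissible solution $x'$ with $\mathrm{val}(x')<\mathrm{val}(x)$.

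Next I would design a single $\Sigma_1$ protocol whose certificate carries one flag bit, selecting between two sub-certificates matching the two disjuncts. In the \emph{non-admissibility} case, I exploit $\adm_\pi\in\co{1}$, i.e. $\overline{\adm_\pi}\in\Lambda_1=\Sigma_1$: the prover hands out the $O(\log n)$-bit certificate attesting $(G,x)\notin\adm_\pi$, and local verification accepts iff $x$ is not admissible. In the \emph{suboptimality} case, the prover writes down a witness admissible solution $x'$ (encodable on $O(\log n)$ bits per node, like any solution), certifies $(G,x')\in\adm_\pi$ using the $\Lambda_1=\Sigma_1$ protocol for $\adm_\pi$, and certifies $\mathrm{val}(x')<\mathrm{val}(x)$ as follows. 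By~(b) each node computes $f_v(x)$ and $f_v(x')$ locally; the prover supplies a spanning tree, certified exactly as in Section~\ref{sec:preliminaries}, together with the partial subtree sums $S(v)=\sum_{w\in T(v)}f_w(x)$ and $S'(v)=\sum_{w\in T(v)}f_w(x')$ at each node~$v$. Every node checks the recurrences $S(v)=f_v(x)+\sum_{\text{children }w}S(w)$ and the analogue for $S'$, and the root checks $S'(r)<S(r)$. As in Claim~\ref{claim:spanningtree}, the partial sums remain $\poly(n)$ and fit on $O(\log n)$ bits.

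For the flag, each node checks that its flag bit agrees with all neighbours; since $G$ is connected, global acceptance forces a single consistent flag, so all nodes run the same sub-protocol. For completeness, if $(G,x)\notin\opt_\pi$ the prover selects the applicable case and the corresponding sub-certificate makes every node accept. For soundness, if $(G,x)\in\opt_\pi$ then $x$ is admissible, so the non-admissibility sub-protocol rejects (by soundness of the $\overline{\adm_\pi}$ protocol), and, $x$ being optimal, there is no admissible $x'$ with smaller value, so the suboptimality sub-protocol rejects as well (either $x'$ fails the admissibility check or the root observes $S'(r)\ge S(r)$); an inconsistent flag already triggers rejection.

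The main obstacle I expect is the \emph{global value comparison}: aggregating and comparing two objective sums across the whole graph using only $O(\log n)$-bit labels and a verification that cannot be fooled by forged partial sums. This rests entirely on the soundness of the spanning-tree summation of Claim~\ref{claim:spanningtree}, together with the $O(\log n)$-bit bound on the sums coming from~(b). A secondary subtlety worth flagging is that the non-admissible case genuinely cannot be folded into the suboptimality case: for a minimization problem a non-admissible $x$ may have value smaller than every admissible solution, so no witness $x'$ with $\mathrm{val}(x')<\mathrm{val}(x)$ exists. This is exactly why the construction needs the flagged two-case split and why the hypothesis requires $\adm_\pi$ to lie in \emph{both} $\Lambda_1$ and $\co{1}$.
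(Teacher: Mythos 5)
Your proposal is correct and follows essentially the same route as the paper: prove $\overline{\opt_\pi}\in\Lambda_1$ via a flagged two-case certificate (non-admissible vs.\ admissible-but-suboptimal), using the $\co{1}$ side of hypothesis~(a) for the first case, the $\Lambda_1$ side for certifying the witness $x'$, and the spanning-tree partial-sum gathering technique to compare the two objective values at the root. The only cosmetic difference is that the paper also certifies admissibility of $x$ itself in the second case, which your soundness argument shows is not strictly needed.
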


\begin{proof}
Let us first prove the following fact. Suppose that every node $u$ of a graph $G=(V,E)$ is given a value $x_u$ on $O(\log n)$ bits, and a value~$s_u$ also on $O(\log n)$ bits. Checking whether $s_u=\sum_{v\in V}x_v$ for every node~$u$ can be achieved by a $\Lambda_1$-algorithm. We describe the algorithm, with, once again, certificates based on a spanning tree. Given a (rooted) spanning tree $T$, every node $u$ is given the certificate for $T$, along with the weight $\sum_{v \in V(T_u)}x_v$ of its subtree~$T_u$. The node~$u$ checks that (1) the spanning tree certificates are locally correct, (2) its value $s_u$ is equal to $s_v$ for each neighbour~$v$, and (3) that the given weight of its subtree is the  sum of the weights of the subtrees rooted at its children, plus~$x_u$. The root~$r$ also checks that the weight of the entire tree is equal to the given value~$s_r$. If one of these properties does not hold at some node, then that node rejects. It follows that every node accepts if and only if $s_u=\sum_{v\in V}x_v$ for every node~$u$. Note that this \emph{gathering technique} can be extended to  functions different from the sum, such as min or max.

Now, let  $\pi$  be an optimization problem on graphs satisfying the conditions of the theorem. To prove $\opt_\pi\in\co{1}$, we show that $\overline{\opt_\pi}\in \Lambda_1$. We describe what certificates are assigned to the nodes by the prover, given an instance $(G,x) \in \overline{\opt_\pi}$. Note that such instance may satisfy either $x$ is not admissible in $G$, or  $x$ is admissible but not optimal. 

\begin{itemize}
\item If $(G,x)\notin\adm_\pi$, then the prover flags each node with $\bot$, and assigns certificates for proving that $(G,x)\notin\adm_\pi$, which is possible thanks to Condition~(a).
\item Otherwise, i.e.,  $(G,x)\in\adm_\pi\setminus \opt_\pi$,  the prover flags each node with $\top$, and assigns certificates  for proving that $(G,x)\in\adm_\pi$ and $(G,x')\in\adm_\pi$, where $x'$ is an arbitrary optimal solution. The latter two sets of certificates can be assigned thanks to Condition~(a). Finally, the prover assigns certificates using the gathering technique to certify the values of the objective function for both $x$ and $x'$. 
\end{itemize}

The nodes then check that all these certificates are consistent, and, in the case with flag $\top$, that  indeed the objective function for $x'$ is better than the one for $x$. If any of these conditions does not hold at some node, then that node rejects.  As a consequence, all the nodes accept if and only if $(G,x)\notin\opt_\pi$. Thus $\overline{\opt_\pi}\in\Lambda_1$.
\end{proof}

Let us give concrete examples of problems satisfying hypotheses (a) and (b). In fact, most classical optimization problems are satisfying these hypotheses, and all the ones typically investigated in the framework of local computing (cf. the survey~\cite{Suomela13}) do satisfy~(a) and~(b).

\begin{corollary}\label{coro:optimization_problems} 
Let $\pi$ be one of the following optimization problems: maximum independent set, minimum dominating set, maximum matching, max-cut, or min-cut.  Then $\opt_\pi \in \co{1}$.
\end{corollary}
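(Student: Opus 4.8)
The plan is to derive each case directly from Theorem~\ref{thm:optimization}: it suffices to verify that each listed problem~$\pi$ satisfies hypotheses~(a) $\adm_\pi \in \Lambda_1 \cap \co{1}$ and~(b) the objective is an additive sum of locally computable $O(\log n)$-bit node values. Once both are established, Theorem~\ref{thm:optimization} immediately gives $\opt_\pi \in \co{1}$. So the real content is a short, problem-by-problem check of these two hypotheses, and I would organize the proof as a small table of verifications rather than a single argument.

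First I would dispatch hypothesis~(b), which is uniform and easy across all five problems: in each case the objective is a cardinality (or a sum of unit edge contributions) that splits naturally as a sum over nodes. For maximum independent set and minimum dominating set, the objective is $|S|$, which equals $\sum_{v} \mathbf{1}[v\in S]$, an additive per-node value on $1$ bit. For maximum matching, max-cut, and min-cut, the objective counts edges, and I would assign each selected or cut edge to one canonical endpoint (say the lower-\id\ one) so that each node contributes the number of such edges it is responsible for; this is locally computable from the radius-$1$ neighbourhood and is encoded on $O(\log n)$ bits since degrees are at most $n$. Thus~(b) holds for all five.

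The main work is hypothesis~(a), showing $\adm_\pi \in \Lambda_1 \cap \co{1}$ for each $\pi$, i.e.\ that \emph{both} admissibility and its complement admit $O(\log n)$-bit single-prover certificates. For most of these, admissibility is already \emph{locally decidable} (in $\ld$), which gives both $\adm_\pi\in\Lambda_0\subseteq\Lambda_1$ and, via $\co{0}\subseteq\Lambda_1$ (Theorem~\ref{thm:cotheorem} with $k=0$), the complementary membership $\adm_\pi\in\co{1}$. Concretely: an independent set is checked by each node verifying it has no selected neighbour; a dominating set by each node checking it or a neighbour is selected; a matching by each node checking it is incident to at most one selected edge; a cut (for max-cut / min-cut) by checking the purported bipartition is consistent across every edge. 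Each of these is a radius-$1$ test, so $\adm_\pi\in\ld\subseteq\Lambda_0\cap\co{0}\subseteq\Lambda_1\cap\co{1}$.

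The one case needing a genuine certificate is whichever notion of admissibility is \emph{not} purely local. For min-cut the relevant subtlety is that a nontrivial cut must separate the graph into two \emph{nonempty} sides (and, depending on the formulation, the two sides may each be required connected); nonemptiness and global consistency of the side-labels are not locally checkable and must be certified. Here I would reuse the spanning-tree machinery of Section~\ref{sec:preliminaries} together with the gathering technique from the proof of Theorem~\ref{thm:optimization}: a $\Lambda_1$ prover supplies a rooted spanning tree and, via subtree sums, certifies that each side is nonempty, placing $\adm_{\mathrm{min\text{-}cut}}\in\Lambda_1$; the complement $\adm_{\mathrm{min\text{-}cut}}\in\co{1}$ then follows from Theorem~\ref{thm:cotheorem}. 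I expect this min-cut case to be the only nonroutine step, and the likely obstacle is pinning down the exact admissibility definition so that the nonemptiness/connectivity constraints are the only global features, all of which the spanning-tree certificate handles. With~(a) and~(b) verified for every listed problem, Theorem~\ref{thm:optimization} yields $\opt_\pi\in\co{1}$ in each case, completing the proof.
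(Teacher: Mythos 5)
Your proposal is correct and follows essentially the same route as the paper: verify hypotheses (a) and (b) of Theorem~\ref{thm:optimization} problem by problem, observing that admissibility is a radius-$1$ test for the set/matching problems and that every partition is an admissible cut (the paper does not impose nonemptiness for min-cut, so your spanning-tree certificate for that case is harmless extra care; likewise your lower-\id{} edge-assignment is an equivalent alternative to the paper's ``half the incident edges'' convention). One slip to fix: the chain $\adm_\pi\in\ld\subseteq\Lambda_0\cap\co{0}$ asserts $\ld\subseteq\co{0}$, which is false in general because of the asymmetry of conjunction-based acceptance (the complement of an $\ld$ language need not be in $\ld$); the correct derivation, which you in fact give in the preceding sentence, is $\adm_\pi\in\Lambda_0\Rightarrow\overline{\adm_\pi}\in\co{0}\subseteq\Lambda_1\Rightarrow\adm_\pi\in\co{1}$, i.e.\ the inclusion $\Lambda_0\subseteq\co{1}$ from the corollary to Theorem~\ref{thm:cotheorem}.
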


\begin{proof}
In view of Theorem~\ref{thm:optimization}, it is sufficient to show that Conditions~(a) and~(b) are satisfied by each problem in this list.  Each of the problems maximum independent set, minimum dominating set, and maximum matching has an easy encoding: a bit that has value 1 if the node is in the set, and zero otherwise. 
These problems satisfies $\adm_\pi\in\Lambda_0$, because each node can check that the local condition specifying admissible solutions holds. It follows that Condition~(a) is satisfied for all these three optimization problems. 
The two cut problems are even easier. The input is a bit that describe on which part of the cut the node is, and every input is admissible as every partition of the nodes defines a cut.  

Regarding Condition~(b), the objective function of maximum independent set, as well as of minimum dominating set, is just the sum of a 0-1 function at each node (0 if not in the set, and 1 otherwise). For maximum matching, as well as for both cut problems, the objective function can be defined as the sum, over all nodes, of half the number of edges adjacent to the node that are involved in the solution. 
\end{proof}

The following other corollary of Theorem~\ref{thm:optimization} deals with two specific optimization problems, namely travelling salesman and MST. The former illustrates a significant difference between the local hierarchy defined from distributed graph automata in~\cite{Reiter15}, and the one in this paper. Indeed, we show that travelling salesman is at the second level of our hierarchy, while it does not even belong to the  graph automata hierarchy (as Hamiltonian cycle is not in MSO). Let $\ts$ be the distributed language formed of all configurations $(G,x)$ where $G$ is a weighted graph, and $x$ is an Hamiltonian cycle $C$ in $G$ of minimum weight (i.e., at node~$u$, $x(u)$ is the pair of edges incident to~$u$ in~$C$). Similarly, let $\mst$ be the distributed language formed of all configurations $(G,x)$ where $G$ is a weighted graph, and $x$ is a MST $T$ in $G$ (i.e., at node~$u$, $x(u)$ is the parent of~$u$ in~$T$). Note that the case of MST is also particularly interesting. Indeed, $\mst$ is known to be in LCP($\log^2(n)$)~\cite{KormanK07}, but  not in $\Lambda_1=\mbox{LCP}(\log(n))$~\cite{KormanKM15}. Note also that, for $\mst$, it is possible to trade locality for the size of the certificates, as it was established in~\cite{KormanKM15} that one can use logarithmic certificates to certify $\mst$ in a logarithmic number of rounds.  A particular consequence of Theorem~\ref{thm:optimization} is the following. 

\begin{corollary} \label{coro:travelsalesman}
$\mst\in\co{1}$ and $\ts \in \co{1}$ for weighted graphs with  weights bounded by a polynomial in~$n$.
\end{corollary}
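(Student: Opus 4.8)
The plan is to derive both membership results as direct instances of Theorem~\ref{thm:optimization}, so the whole task reduces to checking its two hypotheses for the problems underlying $\mst$ and $\ts$. Condition~(b) is the routine half. For $\mst$, the objective is the total tree weight, and since each tree edge has a unique child endpoint under the parent-pointer encoding, I would assign to every node~$v$ the weight $w(v,x(v))$ of the edge to its parent (and $0$ at the root); the sum over all nodes is exactly the weight of the tree, with no double counting. For $\ts$, the objective is the tour weight; here I would assign to each node the sum of the weights of its two incident tour edges, so that the total over all nodes equals \emph{twice} the tour weight. Certifying twice the tour weight is equivalent to certifying the tour weight (the comparison of two tours is unchanged), and this keeps the per-node value integral. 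In both cases the hypothesis ``weights bounded by a polynomial in~$n$'' is precisely what guarantees that each individual value fits on $O(\log n)$ bits, so condition~(b) holds.

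The substance is condition~(a), namely $\adm_\pi\in\Lambda_1\cap\co{1}$. For $\mst$ the admissible solutions are the spanning trees, so $\adm_\pi=\spt$, and $\spt\in\Lambda_1$ is exactly the certification described in Section~\ref{sec:preliminaries}. For $\spt\in\co{1}$ I would certify non-membership, using the characterization that a parent-pointer structure is a spanning tree \emph{if and only if} it has exactly one root (node with $x(v)=\bot$) and its parent pointers contain no directed cycle. Thus a failing instance either has the wrong number of roots or contains a directed cycle: the prover flags which case occurs and certifies it in $\Lambda_1$. The root-count is handled by the gathering technique from the proof of Theorem~\ref{thm:optimization} (lay down a spanning tree of $G$, sum the indicators $[x(v)=\bot]$, and let the gathering root check the total is $\neq 1$); a directed cycle is certified by designating a special cycle node with a flag, giving each cycle node a distance label that increments along the parent pointers and wraps only at the flagged node, so that a valid labelling forces an actual directed cycle. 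This yields $\spt\in\Lambda_1\cap\co{1}$.

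For $\ts$ the admissible solutions are the Hamiltonian cycles. Degree consistency, i.e.\ that the selected edges form a $2$-regular spanning subgraph, is locally checkable and hence in $\ld=\Lambda_0\subseteq\Lambda_1\cap\co{1}$; once it holds, the selected edges are a disjoint union of cycles, and the instance is admissible exactly when there is a single cycle. For $\adm_\pi\in\Lambda_1$ I would use the standard Hamiltonicity certificate: a distance labelling around the unique cycle from a flagged start node, checked for consistent increments and a single wrap-around. For $\adm_\pi\in\co{1}$ I would certify that the $2$-regular subgraph splits into at least two cycles by exhibiting one closed sub-cycle $C_1$ of length strictly less than $n$: a cycle certificate (distance labels closing up) together with a gathering count of $|C_1|$ against the total node count $n$. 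The point making this sound is that a genuinely Hamiltonian instance consists of a single $n$-cycle, whose only closed sub-walk through the selected edges has length $n$, so no shorter valid sub-cycle certificate can be forged; hence a valid $C_1$ with $|C_1|<n$ truly witnesses at least two cycles. This gives $\adm_\pi\in\Lambda_1\cap\co{1}$, and Theorem~\ref{thm:optimization} then delivers $\mst\in\co{1}$ and $\ts\in\co{1}$.

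I expect the main obstacle to be precisely these $\co{1}$ directions: unlike admissibility, which is checked positively, non-admissibility is a global ``no single tree/cycle exists'' statement, and the difficulty is to reduce it to a \emph{local} positive witness (a directed cycle, an extra root, or a short sub-cycle) that a prover can certify with $O(\log n)$ bits. Once the reduction to such witnesses is in place, all certification is carried out by the spanning-tree distance and gathering techniques already established, and the remaining verification at each node is routine local consistency checking.
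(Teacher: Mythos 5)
Your overall route is the same as the paper's: both memberships are obtained by instantiating Theorem~\ref{thm:optimization}, condition~(b) is discharged by splitting the objective into locally computable per-node summands (your doubling of the tour weight is an immaterial variant of the paper's halving), and condition~(a) is where all the work lies, handled by spanning-tree and gathering certificates. For $\spt\in\co{1}$ your decomposition (root count $\neq 1$, or a directed cycle in the parent pointers) differs from the paper's trichotomy (not spanning / spans but has a cycle / is a forest with at least two components), but it is a valid and arguably cleaner characterization given the parent-pointer encoding; your non-Hamiltonicity certificate via a witnessed sub-cycle of length $<n$ compared against $n$ by gathering is likewise sound and matches the spirit of the paper's (unelaborated) $\hc\in\co{1}$ argument.

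The one step that would fail as literally written is $\hc\in\Lambda_1$. You describe ``a distance labelling around the unique cycle from a flagged start node, checked for consistent increments and a single wrap-around,'' but a \emph{single} wrap-around is a global condition: on a $2$-regular instance consisting of several disjoint cycles, an adversarial prover can flag one start node per cycle and label each cycle $0,1,\dots,|C_i|-1$, and every node then sees locally consistent increments and a local wrap, so all nodes accept a non-Hamiltonian instance. The fix --- which the paper makes explicit --- is to additionally certify a spanning tree of $G$ rooted at the flagged node, so that uniqueness of the start is enforced and any cycle not containing it cannot be consistently labelled (its distances would have to increase strictly forever). Since you already deploy exactly this spanning-tree anchoring elsewhere (for the gathering counts and for co-$\spt$), this is an omission of a needed ingredient rather than a wrong idea, but without it the $\Lambda_1$ half of condition~(a) for $\ts$ is not established.
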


\begin{proof}
As for Corollary~\ref{coro:optimization_problems}, it is sufficient to prove that both languages satisfy the two conditions of Theorem~\ref{thm:optimization}. Condition~(b) is satisfied for both as their objective function can be defined as the sum, over all nodes, of half the sum of the weights of the edges incident to the node that are involved in the solution (which can be stored on $O(\log n)$ bits as long as all weights have values polynomial in~$n$). 

To prove Condition~(a), we need to show that checking whether a collection $C$ of edges is an Hamiltonian cycle (resp., is a spanning tree) is in $\Lambda_1\cap \co{1}$.  We already noticed earlier in the paper that $\stt\in\Lambda_1$. To prove that $\hc\in\Lambda_1$, we describe a protocol for that language. Given an Hamiltonian cycle~$C$, the prover elects an arbitrary node $r$ of $C$ as a root, picks a spanning tree $T$ rooted at~$r$, and orients $C$ to form a 1-factor (each node has out-degree~1 in $C$). The certificate at node~$u$ is the identity of~$r$, the distance from $u$ to $r$ in $C$ traversed according to the chosen orientation, and the certificate for~$T$. The verification algorithm at node~$u$ checks the tree $T$ (including whether $u$ agrees with all its neighbours on the identity of~$r$). Node $u$ also checks that one of its neighbours in $C$ is one hop closer to $r$, while the other neighbour in $C$ is one hop farther away from $r$. Node $r$ checks that one of its neighbours in $C$ is at distance 1 from it in $C$, while the other it at some distance $>1$. Also, a node with distance~0 in $C$ checks that it is the root of $T$. If all these tests are passed, then node~$u$ accepts, otherwise it rejects. The check of $T$ guaranties that there is a unique node $r$. The check of the hop distance along $C$ guaranties that all nodes are on the same cycle. If both checks are satisfies then $C$ is a unique cycle, covering all nodes, and therefore $C$ is an Hamiltonian cycle. 

Now, it remains to prove that $\stt\in\co{1}$ and $\hc\in\co{1}$.  Let $F$ be a collection of edges that is not forming a spanning tree of $G$. The following certificates are assigned to the nodes. If $F$ is not spanning all nodes, then let $r$ be a node not spanned by $F$, and let $T$ be a spanning tree rooted at~$r$. The certificate of every node $u$ is a pair $(f(u),c(u))$ where the flag $f(u)=0$, and $c(u)$ is the certificate for $T$. If $F$ is spanning all nodes, but contains a cycle $C$, then let $r$ be a node of $C$, let us orient the edges of $C$ in a consistent manner, and let $T$ be a spanning tree rooted at~$r$. The certificate at node $u$ is a pair $(f(u),c(u))$ where $f(u)=1$, and $c(u)$ is a certificate for $T$. In addition, if $u$ belongs to $C$, then $u$ received as part of its certificate its distance to $r$ in the oriented cycle $C$. Finally, If $F$ is spanning forest, then let $F=\{T_1,\dots,T_k\}$ be the trees in $F$, with $k\geq 2$, and root each one at an arbitrary node $r_i$, $i=1,\dots,k$. Let $T$ (resp., $T'$) be a spanning tree rooted at $r_1$ (resp., $r_2$). For every $i\in\{1,\dots,k\}$, the certificate at node $u$ of $T_i$ is a 4-tuple  $(f(u),\idx(u),c(u),c'(u))$ where $f(u)=2$, $\idx(u)=i$, and $c(u)$ (resp., $c'(u)$) is the certificate for $T$ (resp., $T'$).

The verification procedure is as follows. All nodes checks that they have the same flag $f$. A node detecting that flags differ rejects. A node with flag~0 checks the tree certificates, and the root of the tree checks that it is not spanned by $F$. A node with flag~1 checks the tree certificates, and the root $r$ of the tree checks that it belongs to $C$ (i.e., was given distance~0 on the cycle). The root $r$ also checks that it has one neighbours in $C$ at distance~1, and another neighbour at distance $>1$. All other nodes on $C$ check consistency of the distance counter. Finally, a node with flag~2 checks its tree certificates. The root of $T$ checks that it is of index~1, while the root of $T'$ checks that it is of index~2.  Moreover, every node checks that its incident edges in $F$ have extremities with same index. In the three cases, if all tests are passed, the node accepts, otherwise it rejects. 

By construction, if $F$ is not a spanning tree, then all nodes accepts. Instead, if $F$ is a spanning tree, then certificates with different flags cannot yield all nodes to accept since two adjacent nodes with different flags both reject. A flag~0 cannot yield all nodes to accept because the non spanned node does not exist. Similarly, a flag~1 cannot yield all nodes to accept because the cycle does does not exist, and a flag~2 cannot yield all nodes to accept because there are no two different connected components, and hence no two trees $T$ and $T'$ rooted at nodes with different indexes. 

The proof of  $\hc\in\co{1}$ proceeds similarly. Therefore, both Conditions (a) and (b) are satisfied for both $\mst$ and $\ts$, and the result follows by Theorem~\ref{thm:optimization}. 
\end{proof}

\subsection{Non-trivial automorphism}

The graph automorphism problem is the problem of testing whether a given graph has a nontrivial automorphism (i.e., an automorphism\footnote{Recall that $\phi:V(G)\to V(G)$ is an automorphism of $G$ if and only if $\phi$ is a bijection, and, for every two nodes $u$ and $v$, we have: $\{u,v\} \in E(G)\iff \{\phi(u),\phi(v)\}\in E(G)$.} different from the identity). Let $\nta$ be the distributed  language composed of the (connected) graphs that admit such an automorphism. It is known that this language is maximally hard for locally checkable proofs, in the sense that it requires proofs with size $\Omega(n^2)$ bits~\cite{GoosS11}. Nevertheless, we prove that this language remains relatively low in the local hierarchy.

\begin{theorem}
$\nta\in\Lambda_3$.
\end{theorem}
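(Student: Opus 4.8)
The plan is to design a $\Lambda_3=\Sigma_3$ protocol, i.e.\ a local algorithm $A$ with $(G,x)\in\nta \iff \exists\ell_1\forall\ell_2\exists\ell_3 : A(G,x,\ell_1,\ell_2,\ell_3)=1$, in which $\ell_1$ lets the prover commit to a candidate automorphism, $\ell_2$ lets the disprover point at a single alleged defect of that candidate, and $\ell_3$ lets the prover refute the challenge by broadcasting a short local witness. Concretely, with $\ell_1$ the prover assigns to each node $u$ the claimed values $\phi(u)$ and $\psi(u)$ (the candidate automorphism and its claimed inverse), both stored as identities on $O(\log n)$ bits, together with a spanning-tree certificate (as in Section~\ref{sec:preliminaries}) rooted at a node $r$ for which $\phi(r)\neq r$, certifying nontriviality.

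By a finiteness argument it then suffices to certify two further facts: {\rm (i)}~$\psi(\phi(u))=u$ for every $u$, which forces $\phi$ to be injective and hence a bijection of $V(G)$; and {\rm (ii)}~$\{\phi(u),\phi(v)\}\in E(G)$ for every edge $\{u,v\}$, which together with {\rm (i)} makes $\phi$ an automorphism, since an injective vertex map of a finite graph that sends edges to edges is necessarily an automorphism. Both {\rm (i)} and {\rm (ii)} are ``for all, some far-away fact'' statements, and I would verify each with the same $\forall\exists$ mechanism. For {\rm (ii)} the disprover uses $\ell_2$ to broadcast, via a spanning tree, a single adjacent pair $(u^*,v^*)$ it claims is violated; the prover then uses $\ell_3$ to broadcast a spanning tree \emph{rooted at the node} $a=\phi(u^*)$, announcing that root's identity and the value $b=\phi(v^*)$. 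Node $u^*$ checks that the announced $a,b$ agree with its committed images $\phi(u^*),\phi(v^*)$, while the root $a$ checks both that its own identity equals the announced $a$ and that $b\in N(a)$. For {\rm (i)} the disprover broadcasts a single node $u^*$, and the prover answers with a tree rooted at $a=\phi(u^*)$ announcing $\psi(a)$; node $u^*$ checks $a=\phi(u^*)$ and that the announced value equals $u^*$, while the root $a$ checks the announced value against its own committed $\psi(a)$. Any malformed challenge (invalid tree, non-adjacent pair) is treated as ``no challenge'' and accepted.

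For correctness on \emph{yes}-instances, when $\phi$ is a genuine nontrivial automorphism the prover commits the true $\phi,\psi$; then against any disprover challenge the honestly broadcast witness makes every local check pass, so for all $\ell_2$ there is a good $\ell_3$. For soundness on \emph{no}-instances (rigid graphs), whatever $(\phi,\psi)$ the prover commits, either nontriviality fails, or {\rm (i)} fails at some $u_0$, or {\rm (ii)} fails at some edge; the disprover points exactly there, and the cross-checks pin the prover's $\ell_3$-broadcast down to the committed values, forcing the responsible node to reject no matter what the prover announces. (If both {\rm (i)} and {\rm (ii)} held and the instance were nontrivial, $\phi$ would be a nontrivial automorphism, contradicting rigidity.) All labels are a constant number of identities, tree pointers and flags, hence of size $O(\log n)$.

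The main obstacle, and the reason two quantifiers do not suffice, is the asymmetry of distributed acceptance: the node that performs the ``hard'' far-away test (checking $b\in N(a)$, or reading off $\psi(a)$) cannot locally tell whether the pair it is being asked about is legitimate, and a wrongful rejection by that node cannot be cancelled by another node later noticing the illegitimacy. Rooting the prover's response tree at $a=\phi(u^*)$ is the key device resolving this: it simultaneously proves that $\phi(u^*)$ is a genuine node (ruling out phantom identities, i.e.\ values that are the identity of no node, which a raw image label could otherwise encode), designates the node that must run the far-away test, and forces that node to broadcast exactly its committed value. Placing this device \emph{behind} the disprover's move is precisely what requires the third, existential, layer, consistent with the $\Omega(n^2)$ lower bound of~\cite{GoosS11} that rules out $\Sigma_1$.
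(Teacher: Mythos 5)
Your overall architecture is close in spirit to the paper's ($\ell_1$ commits a candidate $\phi$, and a $\forall\ell_2\exists\ell_3$ game verifies that it is a nontrivial automorphism), and several of your ingredients are sound and even elegant: committing the inverse $\psi$ so that bijectivity reduces to the pointwise condition $\psi(\phi(u))=u$, the finiteness argument that an injective edge-preserving self-map is an automorphism, and rooting the response tree at $\phi(u^*)$ to kill phantom identities. The paper instead never has the disprover point at a single defect: it exhibits a $\Lambda_1$ certificate for the \emph{complement} statement ``$\phi$ is not a nontrivial automorphism'' (a flag plus at most four spanning trees rooted at the witnessing nodes) and then invokes Theorem~\ref{thm:cotheorem} ($\co{1}\subseteq\Lambda_2$) as a black box, so that the only substantive check in the $\forall\exists$ phase is performed by the single root of the $\ell_3$ tree.

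That difference is where your write-up has a genuine gap. Your rule ``any malformed challenge is treated as no challenge and accepted'' is a \emph{global} condition that nodes can only test \emph{locally}. The disprover can produce an $\ell_2$ whose broadcast is locally consistent around a node $u_1$ (naming the pair $(u_1,v_1)$) and locally consistent around a node $u_2$ (naming $(u_2,v_2)$), with the inconsistency visible only at some faraway boundary. The boundary nodes accept (per your rule), but $u_1$ and $u_2$ each believe they are the challenged node and each demand that the single $\ell_3$ tree be rooted at, respectively, $\phi(u_1)$ and $\phi(u_2)$ and announce the corresponding values; since any node detecting an inconsistency in $\ell_3$ must reject (otherwise soundness fails), one of $u_1,u_2$ rejects no matter what the prover does. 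So on a \emph{yes}-instance completeness fails: it is not true that for all $\ell_2$ there is an accepting $\ell_3$. The standard repair is to give $\ell_3$ a second branch in which the prover certifies, via a spanning tree rooted at a node that locally witnesses the inconsistency of $\ell_2$, that the challenge is malformed, with all other nodes deferring to that root; soundness survives because that root must actually see an inconsistency. Your proof does not contain this (or any equivalent) mechanism, whereas the paper's detour through Theorem~\ref{thm:cotheorem} handles ill-formed disprover moves automatically, since a malformed complement-certificate simply causes some node of the $\Lambda_1$ algorithm to reject, and the prover's $\ell_3$ points there.
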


\begin{proof}
The first label $\ell_1$ at node $u$ is an integer that is supposed to be the identity of the image of $u$ by a nontrivial automorphism. Let us denote by  $\phi:V(G)\to V(G)$ the mapping induced by~$\ell_1$. We are left with proving that deciding whether a given $\phi$ is a nontrivial automorphism of $G$ is in $\Lambda_2$. Thanks to Theorem~\ref{thm:cotheorem}, it is sufficient to prove that this decision can be made in co-$\Lambda_1$. Thus let us prove that checking that $(G,\phi)$ is \emph{not} a nontrivial automorphism is in $\Lambda_1$. If $\phi$ is the identity, then the certificate can just encode a flag with this information, and each node $u$ checks that $\phi(u)$ is equal to its own ID. So assume now that $\phi$ is distinct from the identity, but is not an automorphism. To certify this, the prover assigns to each node a set of at most four spanning tree certificates, that ``broadcast'' to all nodes the identity of at most four nodes witnessing that $\phi$ is not an automorphism. Specifically, if $\phi(u)=\phi(v)$ with $u\neq v$, then the certificates are for three spanning trees, respectively rooted at $u,v$, and $\phi(u)$, and if $\{u,v\} \in E(G)$ is mapped to $\{\phi(u),\phi(v)\}\notin E(G)$, or $\{u,v\} \notin E(G)$ is mapped to $\{\phi(u),\phi(v)\}\in E(G)$, then the certificates are for four spanning trees, respectively rooted at $u,v, \phi(u)$, and $\phi(v)$. Checking such certificates can be done locally, and thus checking that $(G,\phi)$ is \emph{not} a nontrivial automorphism is in $\Lambda_1$, from which it follows that $\nta \in \Lambda_3$.  
\end{proof}

\subsection{Problems from the polynomial hierarchy}

As the local hierarchy $\lh{}$ is inspired by the polynomial hierarchy, it is natural to ask about the existence of  connections between their respective levels. In this section, we show that some connections can indeed be established, for central problems in the polynomial hierarchy. For instance, let $k\geq 0$, and let us consider all (connected) graphs $G=(V,E)$ such that there exists $X\subseteq V$, $|X|\geq k$, such that, for every $S\subseteq X$, there is a cycle $C$ in $G$ containing all vertices in $S$, but none in $X\setminus S$. Such graphs have Cycle-VC-dimension, $\mbox{\rm VC}_{cycle}(G)$, at least $k$. Deciding whether, given $G$ and $k$, we have $\mbox{\rm VC}_{cycle}(G)\geq k$ is $\Sigma_3^P$-complete~\cite{Schaefer99,SchaeferU02}. Let $\vck$ be the distributed language composed of all configurations $(G,k)$ such that all nodes of $G$ have the same input~$k$, and $\mbox{\rm VC}_{cycle}(G)\geq k$. 

\begin{theorem}
$\vck\in\Lambda_3$.
\end{theorem}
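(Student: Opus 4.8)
The plan is to realize the natural $\exists\,\forall\,\exists$ structure of the $\Sigma_3^P$-complete definition of Cycle-VC-dimension directly as a $\Sigma_3=\Lambda_3$ decision protocol. The first (existential) label $\ell_1$ is used by the prover to designate the shattering set: each node $u$ receives a membership bit indicating whether $u\in X$, together with a spanning-tree certificate that, via the gathering technique in the proof of Theorem~\ref{thm:optimization}, certifies the total $|X|=\sum_{u} b_1(u)$ and lets the root compare it against the common input $k$. Thus the single outer quantifier $\exists\ell_1$ simultaneously provides $X$ and certifies $|X|\geq k$, the $\Lambda_1$-style count check being performed by the final local algorithm on $\ell_1$. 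The second (universal) label $\ell_2$ lets the disprover designate a candidate subset via a bit $b_2(u)$, with intended meaning $S=\{u:b_2(u)=1\}$, and the third (existential) label $\ell_3$ is the prover's response asserting that some cycle realizes $S$ within $X$.

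It then suffices to show that, once $X$ is fixed by $\ell_1$, deciding whether $X$ is shattered lies in $\Lambda_2=\Pi_2$, since $X$ is shattered exactly when for every $S\subseteq X$ there is a cycle containing all of $S$ and none of $X\setminus S$ --- precisely a $\forall\ell_2\,\exists\ell_3$ statement. I would give the prover two response modes in $\ell_3$. If the disprover's labelling is invalid, i.e.\ $b_2(w)=1$ for some $w\notin X$ so that $S\not\subseteq X$, the prover roots a spanning tree at such a witness $w$ and broadcasts that the disprover cheated; every node verifies this tree and accepts, which is correct because the definition only quantifies over $S\subseteq X$. If instead $S\subseteq X$, the prover certifies an actual cycle $C$ with the required inclusion/exclusion pattern, reusing the cycle-certification machinery developed for $\hc\in\Lambda_1$ in Corollary~\ref{coro:travelsalesman}: orient $C$, root it at a node $r$, hand each cycle node its distance to $r$ along $C$, and add a spanning tree broadcasting $\id(r)$ so that the distance counter forces $C$ to be a single cycle rather than a disjoint union of cycles. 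On top of the cycle check, each node verifies the local constraints $b_2(u)=1\Rightarrow u\in C$ and $(u\in X\wedge b_2(u)=0)\Rightarrow u\notin C$.

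Correctness follows the standard game argument. On a yes-instance the prover selects a genuinely shattered set $X$ of size at least $k$; whatever $\ell_2$ the disprover plays, the prover answers in witness mode if $S\not\subseteq X$ and with the guaranteed cycle otherwise, so all nodes accept. On a no-instance, for every $\ell_1$ either the count certificate fails (when $|X|<k$, which cannot be faked since the subtree-sum consistency checks pin the root's total to the true value) or $X$ is not shattered, in which case the disprover picks a valid $S\subseteq X$ admitting no realizing cycle, leaving the prover unable to use either mode, so some node rejects. All labels are $O(\log n)$ bits, consisting of a constant number of membership flags, subtree sums, distances, and spanning-tree certificates. The main obstacle is the cycle-certification step: I must argue carefully that the orientation-plus-distance encoding, together with the single broadcast root, rules out a ``cycle'' that is really a disjoint union of cycles or that secretly omits a node of $S$ --- this is exactly where the argument from Corollary~\ref{coro:travelsalesman} must be adapted from Hamiltonian to arbitrary (non-spanning) cycles.
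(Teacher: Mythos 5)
Your proposal is correct and follows essentially the same route as the paper: $\ell_1$ certifies $X$ and that $|X|\geq k$, the universal $\ell_2$ ranges over candidate subsets $S$, and $\ell_3$ certifies the realizing cycle by reusing the Hamiltonian-cycle machinery of Corollary~\ref{coro:travelsalesman}. You are in fact somewhat more careful than the paper's two-line sketch, in particular in handling a disprover who plays an $S\not\subseteq X$ and in flagging that the cycle certification must be adapted to non-spanning cycles.
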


\begin{proof}
The existence of the set $X$ can be certified setting a flag at each node in $X$, together with a  tree $T_X$ spanning $X$ for proving that $|X|\geq k$. Given $S\subseteq X$, the cycle $C$ can be certified in the same way as the Hamiltonian cycle in the proof of Corollary~\ref{coro:travelsalesman}. 
\end{proof}

Recall that QBF-$k$-SAT is the problem of whether a formula of the type 
\[
\exists y_1, \forall y_2, ..., Q y_k \Phi(y_1,...,y_k),
\]
can be satisfied, where the $y_i$'s are sets of literals on (distinct) boolean variables, $\Phi$ is formula from propositional logic (we can assume, w.l.o.g., that $\Phi$ is in conjunctive normal form), and $Q$ is the universal quantifier if $k$ is even, and the existential quantifier otherwise. The literals in $y_i$ are said to be at the $i$th level. This problem is complete for the $k$th level $\Sigma_k^P$ of PH. It can be rephrased equivalently into a graph problem, by defining the distributed language $\ksat_k$ formed of all configurations $(G,x)$ where $V(G)=C\cup L$, where $C$ is for clauses,  and $L$ is for literals, and there is edge between the positive and negative literals of a same variable, as well as an edge between each clause and all the literals appearing in the clause. More precisely, the input $x(u)$ of a node $u$ in $G$ can be of the form $(\ell,i,s)$ where $\ell$ stands for ``literal'', $i\in\{1,\dots,k\}$ is the level of that literal, and $s\in\{+,-\}$ indicates whether the literal is positive or negative, or of the form $(c)$ where $c$ stands for ``clause''. There is an edge between $(\ell,i,s)$ and $(\ell,i,\bar{s})$ for all literals, and there is an edge between each clause node (i.e., labelled $(c)$) and all the nodes $(\ell,i,s)$ such that the corresponding literal appears in that clause. We set $(G,x)\in \ksat_k$ if and only if the corresponding formula is in QBF-$k$-SAT. 
 
\begin{theorem}\label{theo:ksatinlambdak}
$\ksat_k \in\Lambda_k$.
\end{theorem}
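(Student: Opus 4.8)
The plan is to read the distributed labels as truth assignments and to let the conjunctive acceptance of a constant-radius algorithm evaluate the CNF matrix $\Phi$. The crucial design choice is a \emph{canonical} encoding: the label $\ell_i$ is consulted only on the positive-literal nodes, so that the truth value of a level-$i$ variable is declared to be the bit that $\ell_i$ assigns to its positive-literal node (tagged $(\ell,i,+)$), the bit on the negative-literal node $(\ell,i,-)$ being ignored. With this convention \emph{every} labelling --- including adversarial ones --- determines a legitimate assignment $a_i$ to the level-$i$ variables, so that the distributed quantifier $\exists \ell_i$ (prover) coincides with the logical $\exists a_i$ and $\forall \ell_i$ (disprover) with $\forall a_i$; in particular no node ever has to reject a malformed certificate. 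The verification algorithm $A$ runs at radius $2$: a clause node $c$ recovers the value of each of its literals (for a positive literal, its own bit; for a negative literal $(\ell,j,-)$, the negation of the bit carried by the positive literal of the same variable, which is adjacent to it and hence within distance $2$ of $c$), accepts iff at least one literal is true, and every non-clause node accepts. Thus $A$ accepts everywhere iff all clauses are satisfied, i.e. iff $\Phi$ holds under the assignment read from the labels.

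For $k$ odd this matches directly. Here $\Lambda_k=\Sigma_k$ has prefix $\exists\,\forall\,\exists\cdots\exists$, identical to the prefix of the formula defining $\ksat_k$, and I assign label $\ell_i$ to play the role of the block $y_i$, so that $a_i$ is controlled by the prover when $i$ is odd and by the disprover when $i$ is even, exactly as the quantifier on $y_i$ prescribes. Since the labels are revealed in the order $\ell_1,\ell_2,\dots,\ell_k$, the information available when $a_i$ is chosen is precisely $a_1,\dots,a_{i-1}$, so the distributed game \emph{is} the QBF game verbatim. I would then verify both directions: on a true instance the prover follows a winning QBF strategy and every clause node accepts whatever valid assignments the disprover produces; on a false instance, for any prover labels the disprover follows a refuting QBF strategy, some clause ends with all literals false, and its node rejects. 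This places $\ksat_k\in\Sigma_k=\Lambda_k$.

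The genuine obstacle is $k$ even, where $\Lambda_k=\Pi_k$ has prefix $\forall\,\exists\cdots\exists$, which is the \emph{dual} of the prefix $\exists\,\forall\cdots\forall$ of $\ksat_k$; a naive label-to-variable simulation would then compute the wrong, role-reversed game. My resolution is to absorb the trailing universal block into the local acceptance test. Because $\Phi$ is a CNF and $\forall$ distributes over $\wedge$, the condition $\forall a_k\,\Phi(a_1,\dots,a_k)$ is equivalent to a static, constant-radius-checkable property of the partial assignment $a_1,\dots,a_{k-1}$: every clause $C$ must be \emph{either} already satisfied by its level-$(<k)$ literals \emph{or} tautological in level $k$, i.e. contain a complementary pair $(\ell,k,+),(\ell,k,-)$ of level-$k$ literals (which a clause node detects among its neighbours, since the two literals of a variable are joined by an edge). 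I would fold precisely this predicate into the clause nodes' check, which leaves the reduced prefix $\exists y_1\,\forall y_2\cdots\exists y_{k-1}$ of $k-1$ (odd) blocks over a matrix that is still a conjunction of locally checkable clause conditions. Applying the odd-case argument to this reduced instance places $\ksat_k$ in $\Sigma_{k-1}=\Lambda_{k-1}\subseteq\Lambda_k$, as required. The bulk of the work is therefore in justifying this folding and in checking that the canonical encoding makes the distributed and logical quantifiers interchangeable in both the accepting and the rejecting direction.
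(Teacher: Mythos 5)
Your proposal is correct and its core is the same as the paper's: read the labels as truth assignments to the variable blocks $y_i$, let clause nodes check at constant radius that some incident literal is true, and let the conjunction over all nodes evaluate the CNF matrix. The paper's own proof is essentially a one-line description of this simulation. Where you genuinely diverge is in the two places the paper glosses over, and both of your additions are substantive. First, the paper has literal nodes ``check that their value is the opposite of the one given to their negation,'' which as stated lets the disprover sabotage a \emph{yes}-instance by submitting an inconsistent even-level labelling and forcing a literal node to reject; your canonical encoding (read only the positive-literal node's bit, so every labelling, adversarial or not, decodes to a legitimate assignment) removes this problem cleanly and makes the distributed and logical quantifiers genuinely interchangeable. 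Second, you are right that for even $k$ there is a real parity mismatch --- $\ksat_k$ has prefix $\exists\forall\cdots\forall$ while $\Lambda_k=\Pi_k$ has prefix $\forall\exists\cdots\exists$ --- which the paper's proof does not address at all. Your resolution, folding the trailing universal block into the clause check via the observation that $\forall a_k\,C$ holds iff $C$ is satisfied by its level-$(<k)$ literals or contains a complementary pair of level-$k$ literals (detectable in the radius-$1$ induced ball of the clause node), is a correct, problem-specific alternative to the generic route of invoking Theorem~\ref{thm:collapse-theorem} (the collapse $\Sigma_{2m}=\Sigma_{2m-1}$), and it lands $\ksat_k$ in $\Sigma_{k-1}=\Lambda_{k-1}\subseteq\Lambda_k$. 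In short, your argument buys rigour on exactly the two points where the paper's sketch is fragile, at the cost of being longer; the paper's version buys brevity by leaving both issues implicit.
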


\begin{proof}
For a configuration in $\ksat_k$, the certificates are given to the nodes in the natural way, assigning their values to the literals at odd levels. Each literal node can locally check that its value is the opposite of the one given to its negation, and each clause node can locally check that it is linked to at least one literal that has value true.
\end{proof}

\subsection{Connections to descriptive complexity}

The notion of locality is an important subject when considering expressibility of different logics on graphs, as illustrated by the locality result of Schwentick and Barthelmann~\cite{SchwentickB99} for first-order logic. It is then a natural question to ask which are the logics that express properties in $\lh{}$. A first answer was given by Göös and Suomela~\cite{GoosS11} who proved that all properties expressible in existential-MSO are in $\loglcp{}$. One can then expect that $\lh{}$ contains MSO, and it is indeed the case. An easy way to prove this fact is to use the recent result of Reiter \cite{Reiter15}. 

\begin{theorem}\label{theo:msoinlh}
$\mso\in\lh$.
\end{theorem}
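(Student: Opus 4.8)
The plan is to reduce the statement to Reiter's theorem~\cite{Reiter15}, which identifies $\mso$ with the distributed-graph-automata analogue of the polynomial hierarchy. So I would fix an arbitrary $\mso$-definable property $P$ and take the alternating graph-automata protocol $\mathcal{A}$ that this theorem guarantees for $P$: it consists of $k$ alternating set quantifiers $\exists X_1, \forall X_2, \dots$ over subsets of the vertex set, followed by a local finite-automaton computation and a \emph{global} acceptance function $F$ applied to the collection of halting states. The whole task is then to embed $\mathcal{A}$ into the local hierarchy $\lh$. This is why the result is ``easy'': the hard equivalence with logic has already been carried out in~\cite{Reiter15}, and only three model discrepancies remain to be reconciled --- finite automata versus Turing machines, anonymity versus unique identities, and the arbitrary global decision $F$ versus the conjunction-based acceptance of $\lh$.

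The first step is to turn the $k$ set quantifiers into the $k$ labellings $\ell_1,\dots,\ell_k$ of an $\lh$-protocol: a single bit of $\ell_i$ encodes membership of a node in $X_i$, which is well within the $O(\log n)$-bit budget, and the $\exists/\forall$ alternation of the set quantifiers is exactly the prover/disprover alternation defining $\Sigma_k$ and $\Pi_k$. Since every node of $\lh$ is a Turing machine, it can simulate the finite automaton of $\mathcal{A}$ on its labelled neighbourhood while simply ignoring its own identity, so both the finite-automaton and the anonymity discrepancies are absorbed for free, the $\lh$ model being only the more powerful of the two on these counts.

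The substantive part is the global acceptance function $F$. Here I would use that the automaton has a fixed finite state set $Q$, so that the collection of halting states is determined by the vector of per-state counts, an $O(\log n)$-bit quantity (one count in $[0,n]$ for each of the $O(1)$ states). Using the gathering technique from the proof of Theorem~\ref{thm:optimization} --- a spanning tree together with, at each node, the vector of per-state counts over its subtree --- the prover broadcasts this summary to a designated root $r$ and has it certified in $\Lambda_1$; the local checks force the gathered counts to coincide with the true counts, since each node can recompute its own halting state and verify the subtree sums. The root then evaluates $F$ and outputs its verdict, while every other node accepts precisely when its gathering and spanning-tree checks pass. Consequently ``all nodes accept'' holds if and only if the gathering is consistent \emph{and} $F$ accepts on the true count vector, which is exactly ``$\mathcal{A}$ accepts''; in particular, no reversal of decisions is needed, as the root naturally supplies a rejecting node whenever $F$ rejects. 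Folding the gathering certificate into the (existential) last labelling --- available because, by Theorem~\ref{thm:collapse-theorem}, every $\Lambda_k$ ends with an existential quantifier --- places $P$ in $\Lambda_{k+O(1)}$, and hence in $\lh$.

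The step I expect to be the main obstacle is faithfully capturing the automaton's computation within a constant number of rounds and $O(\log n)$-bit labels: one must guarantee that each node's \emph{halting} state, on which $F$ is evaluated, is genuinely determined and locally checkable, rather than merely some consistent guess. If the automaton of~\cite{Reiter15} stabilizes within a bounded neighbourhood this is immediate; otherwise one certifies its final configuration as an additional locally verified labelling and argues, from the automaton's normal form, that the certified configuration is forced to be the true one. Verifying this, together with the bookkeeping that the per-state gathering and the existential folding stay within the logarithmic label budget, is the only genuinely technical point; everything else follows from Reiter's characterization and the spanning-tree machinery already developed in the paper.
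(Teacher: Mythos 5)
Your proposal is correct and follows essentially the same route as the paper: invoke Reiter's characterization, absorb the finite-automaton and anonymity discrepancies for free, and simulate the global acceptance function by gathering the nodes' outputs along a certified spanning tree introduced via one extra existential labelling. The only cosmetic difference is that you aggregate per-state \emph{counts} ($O(\log n)$ bits) whereas the paper aggregates only the \emph{set} of outputs ($O(1)$ bits, since the DGA acceptance function depends only on that set); both fit the label budget, so the argument is the same.
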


\begin{proof}
As we pointed out earlier in the text, distributed graph automata (DGA) are based on a combination of hypotheses, some weaker  than the LOCAL model, and other ones stronger. On the one hand, all computation and communication steps in DGA can be simulated in the LOCAL model. On the other hand, the decision mechanism in DGA is stronger than the one typically used in local decision, as far as distributed network computing is concerned. Specifically, in our framework, the nodes output true or false (i.e., 1 or 0), and the instance is accepted if and only if all the outputs are true. That is, the decision mechanism is simply the conjunction of all the outputs, whereas, in DGA, the outputs belong to an arbitrary finite set $S$, and the decision mechanism is an arbitrary function $f$ from the set $\mathcal{O}$ of the outputs to $\{\mbox{accept, reject}\}$. Note that $\mathcal{O}$ is a \emph{set}, and therefore a same output at two different nodes appears only once in $\mathcal{O}$.
 
Let us consider a language $L$ at level $k$ of the DGA hierarchy. We show that this language is at level at most $k+1$ of $\lh{}$. Indeed we can run exactly the same protocol as in DGA, but the decision decision mechanism. Nevertheless, the decision mechanism can be simulated with an additional existential quantifier certifying a spanning tree $T$ that is used to gather all the outputs of the nodes produced by the DGA protocol, in a way similar to the one in Claim~\ref{claim:spanningtree}: each node checks that its set of outputs is the union of the output sets of its children in $T$. The root of $T$ stores the entire set $\mathcal{O}$ (which can be done using $O(|S|\log |S|)=O(1)$ bits), computes $f(\mathcal{O})$, and accepts or rejects accordingly. 
\end{proof}

\section{Conclusion}

In this paper, we have defined and analyzed a local hierarchy $\lh$ of decision generalizing proof-labelling schemes and locally checkable proofs. Using this hierarchy, we have defined interactive local decision protocols enabling to decrease the size of the distributed certificates. We have defined the hierarchy for $O(\log n)$-bit size labels, mostly because this extends the class $\llcp$ in~\cite{GoosS11}, and because this size fits with the classical CONGEST model for distributed computation~\cite{Peleg00}. However, most of our results can be extended to labels on $O(B(n))$ bits, for $B(n)$ larger than $\log n$. In particular, it is worth noticing that the existence of a language $L$ outside $\lh$  holds as long as $B=o(n)$. 

The main open problem is whether $\lh$ has infinitely many levels, or whether it collapses at some level $\Lambda_k$. We know that the latter can only happen for $k\geq 2$, and thus it would be quite interesting to know whether $\Lambda_3\neq \Lambda_2$. In particular, all the typical counting arguments used to separate $\Lambda_2$ from $\Lambda_1$, or, more generally, to give lower bounds on the label size in proof-labelling schemes or locally checkable proofs appear to be too weak for separating $\Lambda_3$ from $\Lambda_2$. A separation result for $\Lambda_3\neq \Lambda_2$ would thus probably provide new tools and concepts for the design of space lower bounds in the framework of distributed computing. 

\subparagraph*{Acknowledgements} 
We thank Jukka Suomela for pointing out that a counting argument can be used to find languages outside the local hierarchy, and Fabian Reiter for fruitful discussions about distributed graph automata. 

\bibliographystyle{plain}
\bibliography{bibliography}

\end{document}